\definecolor{Darkblue}{rgb}{0,0,0.4}
\newtheorem{theorem}{Theorem}
\newtheorem{lemma}{Lemma}
\newtheorem{claim}[lemma]{Claim}
\newtheorem{definition}[lemma]{Definition}
\newtheorem{corollary}[lemma]{Corollary}
\newtheorem{conjecture}{Conjecture}
\numberwithin{equation}{section}
\newcommand{\R}{\mathbb{R}}
\newcommand{\etal}{{et al. \xspace}}
\newcommand{\modu}{\rm mod~}
\date{}
\begin{document}
	\title{Plurality in Spatial Voting Games with Constant $\beta$\thanks{A preliminary version of this paper has appeared in AAAI’21~\cite{FF21}.}
	}
	\author[1]{Arnold Filtser\thanks{This research was supported by the ISRAEL SCIENCE FOUNDATION (grant No. 1042/22).}}
	\author[2]{Omrit Filtser}
	\affil[1]{Department of Computer Science, Bar-Ilan University, Ramat Gan, Israel, \texttt{arnold273@gmail.com}}
	\affil[2]{Department of Mathmatics and Computer Science, The Open University of Israel, Ra'anana, Israel, \texttt{omrit.filtser@gmail.com}}

	\maketitle
	\begin{abstract}
		\sloppy 
		Consider a set $V$ of voters, represented by a multiset in a metric space $(X,d)$. The voters have to reach a decision --- a point in $X$. A choice $p\in X$ is called a $\beta$-plurality point for $V$, if for any other choice $q\in X$ it holds that $|\{v\in V\mid \beta\cdot d(p,v)\le d(q,v)\}|
		\ge\frac{|V|}{2}$. In other words, at least half of the voters ``prefer'' $p$ over $q$, when an extra factor of $\beta$ is taken in favor of $p$.
		For $\beta=1$, this is equivalent to Condorcet winner, which rarely exists. The concept of $\beta$-plurality was suggested by Aronov, de Berg, Gudmundsson, and Horton [TALG 2021] as a relaxation of the Condorcet criterion. 
		
		Let $\beta^*_{(X,d)}=\sup\{\beta\mid \mbox{every finite multiset $V$ in $X$ admits a $\beta$-plurality point}\}$.
		The parameter $\beta^*$ determines the amount of relaxation required in order to reach a stable decision.
		Aronov et al. showed that for the Euclidean plane $\beta^*_{(\R^2,\|\cdot\|_2)}=\frac{\sqrt{3}}{2}$, and more generally, for $d$-dimensional Euclidean space, $\frac{1}{\sqrt{d}}\le \beta^*_{(\R^d,\|\cdot\|_2)}\le\frac{\sqrt{3}}{2}$.
		In this paper, we show that $0.557\le \beta^*_{(\R^d,\|\cdot\|_2)}$ for any dimension $d$ (notice that $\frac{1}{\sqrt{d}}<0.557$ for any $d\ge 4$). In addition, we prove that for every metric space $(X,d)$ it holds that $\sqrt{2}-1\le\beta^*_{(X,d)}$, and show that there exists a metric space for which $\beta^*_{(X,d)}\le \frac12$.

	\end{abstract}
	
	\section{Introduction}
	When a group of agents wants to reach a joint decision, it is often natural to embed their preferences in some metric space. 
	The preferences of each agent are represented by a metric point (also referred to as a voter).
	Each point in the metric space is a potential choice, where an agent/voter prefers choices closer to its point over farther choices.
	The goal is to reach a stable decision, in the sense that no alternative choice is preferred by a majority of the voters. Such a decision is often referred to as a Condorcet winner.

	More formally, consider a metric space $(X,d)$, and a finite multiset of points $V$ from $X$, called voters. A voter $v$ prefers a choice $p\in X$ over a choice $q\in X$ if $d(p,v)< d(q,v)$.
	Specifically, a point $p\in X$ is a \emph{plurality point} if for any other point $q\in X$, 
	the number of voters preferring $p$ over $q$ is at least the number of voters preferring $q$ over $p$, i.e., $\left|\left\{v\in V\mid d(p,v)< d(q,v)\right\}\right|\ge\left|\left\{v\in V\mid d(p,v)> d(q,v)\right\}\right|$.\footnote{A more accurate name for such a point, which is also used in the literature, is \emph{Condorcet winner}.
	However, as this work is mainly concerned with the term $\beta$-plurality point defined in \cite{ABGH21}, we choose to keep their terminology. 		
	}
	
	The special case where $(X,d)$ is the Euclidean space, i.e., $(\R^d,\|\cdot\|_2)$, is called spatial voting games, and was studied in the political economy context \cite{black1948rationale,downs1957economic,Plott67,EH83}.
	When $X=\R$ is the real line, a plurality point always exist, in fact, it is simply the median of $V$ (for even $V$, there are two plurality points). When $(X,d)$ is induced by the shortest path metric of a tree graph, then again a plurality point always exists, as any separator vertex\footnote{If $T$ is the tree inducing $(X,d)$, a separator vertex is a vertex $z\in X$, the removal of which will break the graph $T$ into connected components, each containing at most $\frac{|V|}{2}$ voters. Every tree contains a separator vertex \cite{JOR69}.} is a plurality point.
	However, already in $\R^2$ a plurality point does not always exist, and moreover, it exists only for a negligible  portion of the point sets.
	Indeed, for any $d\ge2$, a plurality point for a multiset $V$ in $\R^d$ exists if and only if all median hyperplanes\footnote{A median hyperplane for $V$ is a hyperplane such that both open half-spaces defined by it contain less than $\frac{|V|}{2}$ voters.\label{foot:medianHyper}} for $V$ have a common intersection point (see \cite{EH83,Plott67}). Wu et al. \cite{WLWC14} and de Berg et al. \cite{BGM18} presented algorithms that determine whether such a point exist.

	Recently, Aronov, de Berg, Gudmundsson, and Horton \cite{ABGH21}, introduced a relaxation for the concept of plurality points, by defining a point $p\in X$ to be a \emph{$\beta$-plurality point}, for $\beta\in(0,1]$, if for every other point $q\in X$, it holds that $\left|\left\{v\in V\mid \beta\cdot d(p,v)< d(q,v)\right\}\right|\ge\left|\left\{v\in V\mid \beta\cdot d(p,v)> d(q,v)\right\}\right|$. In other words, if we scale distances towards $p$ by a factor of $\beta$, then for every choice  $q$, the number of voters preferring $p$ over $q$ is at least the number of voters preferring $q$ over $p$. Set 
	\begin{align}
		\beta_{(X,d)}(p,V):= & \sup\{\beta\mid p\mbox{ is a }\beta\mbox{-plurality point in }X\mbox{ w.r.t. }V\}~,\nonumber\\
		\beta_{(X,d)}(V):= & \sup_{p\in X}\{\beta_{(X,d)}(p,V)\}~,\label{eq:AdBGH21Def}\\
		\beta_{(X,d)}^{*}:= & \inf\{\beta_{(X,d)}(V)\mid\mbox{\ensuremath{V} is a multiset in \ensuremath{X}}\}~.\nonumber
	\end{align}
	A natural question is to find or estimate these parameters for a given metric space.
	Notice that as $\beta$ becomes larger, 
	a $\beta$-plurality point becomes more similar to a ``normal'' plurality point, and for $\beta=1$ the two concepts are the same.
	Therefore, it is interesting to know what values of $\beta$ are required for a given metric space in order to reach a stable decision. These bounds give an indication on the amount of relaxation that might be needed, and how reasonable it is.

	Aronov \etal \cite{ABGH21} studied $\beta$-plurality for the case of Euclidean space, i.e., $(\R^d,\|\cdot\|_2)$. Given a specific instance $V$, they presented an EPTAS to approximate $\beta_{(\R^d,\|\cdot\|_2)}(V)$.
	For the case of the Euclidean plane ($d=2$), they showed that $\beta^*_{(\R^2,\|\cdot\|_2)}= \frac{\sqrt{3}}{2}$. 
	Specifically, they showed that for every multiset of voters $V$ in $\R^2$, there exists a point $p\in \R^2$ such that $\beta_{(\R^2,\|\cdot\|_2)}(V,p)\ge\frac{\sqrt{3}}{2}$.
	Furthermore, they showed that for the case where $V$ consist of the three vertices of an equilateral triangle, it holds that $\beta_{(\R^2,\|\cdot\|_2)}(V)\le \frac{\sqrt{3}}{2}$. 
	For the general $d$-dimensional Euclidean space  $(\R^d,\|\cdot\|_2)$, Aronov \etal showed a lower bound of $\beta_{(\R^2,\|\cdot\|_2)}^*(V)\ge \frac{1}{\sqrt{d}}$. 
	The problem of closing the gap between $\frac{1}{\sqrt{d}}$ and $\frac{\sqrt{3}}{2}$ was left by Aronov \etal as a ``main open problem''. In addition, they asked what bound on $\beta^*$ could be proved in other metric spaces.
	
	\begin{table}[t]
		\centering
		\begin{tabular}{|l|l|l|l|}
			\hline
			\textbf{Space}                                          & \textbf{Lower Bound} & \textbf{Upper Bound}  & \textbf{Ref}                                                                                                  \\ \hline\hline
			\textbf{$\mathbb{R}$ and tree metric}                    & 1                    & 1                    &                                                                                                               \\ \hline
			\textbf{$(\mathbb{R}^2,\|\cdot\|_2)$}                    & $\nicefrac{\sqrt{3}}{2}\approx0.866$ & $\nicefrac{\sqrt{3}}{2}$  & \cite{ABGH21}                                                                                \\ \hline
			\textbf{$(\mathbb{R}^3,\|\cdot\|_2)$}& $\nicefrac{1}{\sqrt{3}}\approx 0.577$ & $\nicefrac{\sqrt{3}}{2}$  & \cite{ABGH21}                                                                                \\ \hline
			\textbf{$(\mathbb{R}^d,\|\cdot\|_2)$ for $d\ge 4$}       & $\approx0.557$  & $\nicefrac{\sqrt{3}}{2}$         &\Cref{thm:Euclidean}, \cite{ABGH21}                                            \\ \hline
			\textbf{General metric space}                           & $\sqrt{2}-1\approx 0.414$  & $\nicefrac12$                    &\Cref{thm:main}, \Cref{thm:LB} \\ \hline
		\end{tabular}
		\caption{\label{tbl:results}Summary of current
			and previous results on $\beta^*_X$ for different metric spaces.}
		\vspace{-10pt}
	\end{table}

	\paragraph{Our contribution.} We prove that for every metric space $(X,d)$, it holds that $\beta^*_{(X,d)}\ge\sqrt{2}-1$. Note that Aronov et al.~\cite{ABGH21} gave a lower bound of $\frac{1}{\sqrt{d}}$ for the Euclidean metric, while our result shows a constant lower bound for any metric space.
	In addition,  we provide an example of a metric space $(X,d)$ for which $\beta^*_{(X,d)}= \frac12$. In fact, we show that $\beta^*_{(X,d)}\le \frac12$ for any (continuous) graph metric $(X,d)$ that contains a cycle (in contrast to tree metrics, for which $\beta^*_{(X,d)}= 1$). 
	Finally, for the case of Euclidean space of arbitrary dimension $d$, we show that $\beta^*_{(\R^d,\|\cdot\|_2)}\ge0.557$. 
	Note that this lower bound is larger than $\frac{1}{\sqrt{d}}$ for $d\ge 4$.
	All the current and previous results are summarized in \Cref{tbl:results}.

	\paragraph{Related work}
	A well known relaxation for the concept of plurality points in Euclidean space is the yolk ~\cite{Mckelvey86,MGF89,FGM88,GW19,Miller2015}, which is the smallest ball intersecting every median hyperplane~$^{\ref{foot:medianHyper}}$ of $V$. The center of the yolk is a good heuristic for a plurality point (see \cite{MG08} for a list of properties the yolk posses).
	Notice that the definition of $\beta$-plurality applies for any metric space, not necessarily Euclidean as in the concept of yolk.
	
	Another relaxation studied by Lin \etal\cite{LWWC15} is the ``minimum cost plurality problem''. Here, given a set of voters $V$ with some cost function, the goal is to find a set $W$ of minimum cost such that $V\setminus W$ contains a plurality point. They prove that the problem is NP-hard in general. 
	de Berg, Gudmundsson, and Mehr \cite{BGM18} (improving over \cite{LWWC15}) provided an $O(n^4\cdot d^2)$ time algorithm for the case of equal costs.
	
	A main drawback of the spatial voting model in the realistic political context was underlined by Stokes~\cite{Stokes63}. The claim is that this model does not take into account the so-called ``valence issues'': qualities of the candidates such as charisma and competence~\cite{ECGV18}, a strong party support~\cite{Wiseman06}, and even the campaign spending~\cite{Herrera08}.	
	Therefore, several more realistic models have been proposed (see, e.g., \cite{GR19,GHR11,SCSW11}). A common model is the multiplicative model which was introduced by Hollard and Rossignol~\cite{HR08}, and is defined for two-candidate spatial voting model. This model is closely related to the concept of $\beta$-plurality. 
	In more detail, in \cite{HR08} there are two candidates with given valences $\sigma_1,\sigma_2$, and they need to choose policies $x_1,x_2\in \R^d$, respectively. A voter $a\in\R^d$ prefers the first candidate if $\frac{1}{\sigma_1}\cdot\|x_1-a\|_2<\frac{1}{\sigma_2}\cdot\|x_2-a\|_2$ (and the second if $\frac{1}{\sigma_1}\cdot\|x_1-a\|_2>\frac{1}{\sigma_2}\cdot\|x_2-a\|_2$). In their study, the valences are fixed, and the candidates are choosing polices in order to win the election. However, the information about the preferred policies of the voters (points in $\R^d$) is not given in full (e.g., only their distribution is given). In contrast, in our paper all the information about the voters is known, and we are looking for the minimum valence that will allow a candidate to choose a single policy, so that he will win the election against any other policy represented by a candidate with valence $1$.
	
	\section{General Metric Spaces}\label{sec:generalMetric}
	We begin by providing a (slightly) alternative definition of $\beta$-plurality point.
	\begin{definition}\label{def:Plurpoint}
		Consider a metric space $(X,d)$, and a multiset $V$ in $X$ of voters. A point $p\in X$ is a $\beta$-plurality point if for every $q\in X$, we have $\left|\left\{v\in V\mid \beta\cdot d(p,v)\le d(q,v)\right\}\right|\ge\frac{|V|}{2}$.
		The rest is similar to \cite{ABGH21} (and \cref{eq:AdBGH21Def}): $\beta_{(X,d)}(p,V)= \sup\{\beta\mid p\mbox{ is a }\beta\mbox{-plurality point in }X\mbox{ w.r.t. }V\}$, 	
		$\beta_{(X,d)}(V)=\sup_{ p\in X}\beta(p,V)$, and $\beta^*_{(X,d)}=\inf\{\beta_{(X,d)}(V)\mid \mbox{$V$ is a multiset in $X$} \}$.
	\end{definition}
	The difference between the definitions is that \Cref{def:Plurpoint} is deciding ties in favor of $p$, that is, a voter $v$ for which $\beta\cdot d(p,v)=d(q,v)$, will choose $p$ over $q$, while in the original definition from \cite{ABGH21}, such voters remain ``undecided''.
	The $\beta_{(X,d)}(p,V)$ parameter is equivalent in these two definitions. This happens due to the supremum used in the definitions which eliminates the difference between strong and weak inequalities ($<,\le$). We prove this equivalence formally in \Cref{sec:defEquiv}. 
	
	Consider a metric space $(X,d)$, with a multiset $V$ of voters from $X$, and set $|V|=n$.
		For a point $p$ and radius $r$, denote by $B_V(p,r)=\{v\in V\mid d(p,v)\le r\}$ the (multi) subset of voters at distance at most $r$ from $p$ (i.e., those that are contained in the closed ball of radius $r$ centered at $p$), and let $R_p$ be the minimum radius such that $|B_V(p,R_p)|\ge\frac{n}{2}$.
	
	The following theorem shows that a $(\sqrt{2}-1)$-plurality point always exists. The fact that the lower bound is constant, and even close to $\frac12$, demonstrates the strength of $\beta$-plurality in the sense that for any set of voters and in any metric space, the multiplication factor needed for the existence of such winner is a fixed constant, and thus the amount of relaxation is bounded.
	\begin{theorem}\label{thm:main}
		For every metric space $(X,d)$, we have $\beta^*_{(X,d)}\ge\sqrt{2}-1$.
	\end{theorem}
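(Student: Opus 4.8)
The plan is to fix a multiset $V$ with $|V|=n$ and exhibit a single point $p\in X$ which serves as a $(\sqrt2-1)$-plurality point; the natural candidate is a point $p$ minimizing $R_p$, i.e.\ a ``1-median-ball center'': choose $p$ so that $R_p=\min_{x\in X}R_x=:R$. The key structural fact to extract is that for \emph{any} other point $q$, we cannot have the ball $B_V(q,r)$ beat $B_V(p, r)$ by too much, because $p$ was chosen to minimize the radius needed to capture half the voters. Concretely, I would split the voters into $A=B_V(p,R)$ (at least $n/2$ of them, all within $R$ of $p$) and the rest, and then argue that for a given challenger $q$, most of the voters in $A$ still satisfy $(\sqrt2-1)\,d(p,v)\le d(q,v)$, i.e.\ $d(q,v)\ge(\sqrt2-1)\,d(p,v)$, unless $q$ is very close to $p$ — and if $q$ is very close to $p$, then the inequality $d(q,v)\ge (\sqrt2-1) d(p,v)$ might fail for voters near $p$, so those ``bad'' voters are themselves clustered near $p$, hence near $q$, and I recover them via a different ball argument centered at $q$.

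The heart of the argument is the following case analysis on $\delta:=d(p,q)$. First I would handle the easy regime: if $\delta$ is large relative to $R$, then by the triangle inequality every voter $v$ with $d(p,v)\le R$ has $d(q,v)\ge \delta - R$, and if $\delta$ is large enough this is $\ge(\sqrt2-1)\cdot(\text{something}\ge d(p,v))$... but this is too crude since $d(p,v)$ can be as small as $0$; the correct move is to bound $d(q,v)\ge (\sqrt2-1) d(p,v)$ by combining $d(q,v)\ge \delta-d(p,v)$ with the requirement, which forces $d(p,v)\le \delta/\sqrt2$. So the bad voters (those violating the inequality, when $\delta$ is in the relevant range) all lie in $B(p,\delta/\sqrt2)$, and simultaneously — since $d(q,v)< (\sqrt2-1) d(p,v) < (\sqrt2-1)\delta/\sqrt2 < \delta$ — they lie in a small ball around $q$ as well. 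The idea is then that there cannot be too many such voters: if there were $\ge n/2$ voters inside $B(p,\delta/\sqrt2)$ we would contradict (or improve on) the minimality that defines $R$, giving a bound $\delta/\sqrt2 \ge R$ in that subcase, which then feeds back to make the first estimate work. In the complementary subcase $\delta/\sqrt2 < R$, i.e.\ $\delta < \sqrt2\, R$, the challenger is close, and I count: the $\ge n/2$ voters of $B_V(p,R)$, minus the bad ones in $B(p,\delta/\sqrt2)$, still satisfy the $\beta$-inequality, and I need the bad count to be at most (voters of $B_V(p,R)$) $-\,n/2$, which I again control by the minimality of $R$ applied to the center $q$ (the bad voters sit in a tiny ball around $q$ of radius $<(\sqrt2-1)R$).

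The constant $\sqrt2-1$ should pop out exactly as the $\beta$ making these two counting inequalities compatible; I expect the tight case to be a geometric configuration like an equilateral-triangle-type instance where $R$, $\delta$, and the two clusters balance. The main obstacle is getting the bookkeeping of the ``bad set'' right: one must simultaneously use that the bad voters are close to $p$ (to cap how many there can be, via the definition of $R$ as a \emph{minimum}) and close to $q$ (to handle the case where $q$ nearly coincides with $p$), and make sure the two regimes of $\delta$ overlap with no gap. A clean way to organize this is to phrase everything in terms of $r$-balls and the single inequality ``for all $x$, $|B_V(x,r)|\ge n/2 \implies r\ge R$'', applied once at $x=p$ (trivially, with equality) and once at $x=q$ with a carefully chosen radius $r=(\sqrt2-1)\cdot(\text{bad-voter radius})$; balancing $R$ against that radius is exactly where $\beta=\sqrt2-1$ is forced. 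I would also double-check the boundary/tie convention from \Cref{def:Plurpoint} (ties go to $p$), since the bound is stated with $\le$ and the extremal configurations may well be tight on the boundary.
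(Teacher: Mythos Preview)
Your setup is the paper's setup: take $p$ minimizing $R_p$, set $R=R_p$, and split on $\delta=d(p,q)$. The far case ($\delta\ge\sqrt2\,R$, equivalently the paper's $\alpha\ge\beta$) is exactly right and matches the paper: every $v\in B_V(p,R)$ satisfies $d(q,v)\ge\delta-d(p,v)\ge(\sqrt2-1)d(p,v)$.

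The close case has two problems. First, a sign slip: from $d(q,v)\ge\delta-d(p,v)$ you get that voters with $d(p,v)\le\delta/\sqrt2$ are \emph{good}, so the bad ones lie \emph{outside} $B(p,\delta/\sqrt2)$, not inside; your subsequent chain ``$d(q,v)<(\sqrt2-1)d(p,v)<(\sqrt2-1)\,\delta/\sqrt2$'' uses the wrong inequality on $d(p,v)$. Second, and more seriously, the subtraction ``$|B_V(p,R)|-|\text{bad}| \ge n/2$'' cannot close: in the worst case $|B_V(p,R)|=\lceil n/2\rceil$, so you would need zero bad voters in $B_V(p,R)$, and knowing only that the bad voters sit in a ball of radius $(\sqrt2-1)R<R$ around $q$ merely gives $<n/2$ of them, not zero.

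The fix --- and this is precisely what the paper does --- is to stop restricting to $B_V(p,R)$ in the close case. Work over all of $V$: if $v$ is bad then $d(q,v)<\beta\,d(p,v)\le\beta\bigl(\delta+d(q,v)\bigr)$, hence $d(q,v)<\dfrac{\beta}{1-\beta}\,\delta=\dfrac{\delta}{\sqrt2}\le R\le R_q$. Thus every bad voter lies in the open ball of radius $R_q$ around $q$, which by definition of $R_q$ contains fewer than $n/2$ voters; equivalently (the paper's phrasing), every voter outside $\mathring{B}_q$ is good, and there are at least $n/2$ of those. The identity $\beta=\dfrac{1}{2+\beta}$ is exactly what makes the constant $\sqrt2-1$ fall out. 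With this correction your argument becomes the paper's proof.
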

	\emph{Proof.}~\sloppy Let $p^*\in X$ be the point with minimum $R_{p}$ over all $p\in X$, and let $B_{p^*}=B_V(p^*,R_{p^*})$.
		We claim that $p^*$ is a $(\sqrt{2}-1)$-plurality point. 
		
		\begin{wrapfigure}{r}{0.3\textwidth}
			\begin{center}
				\vspace{-35pt}
				\includegraphics[scale=0.5]{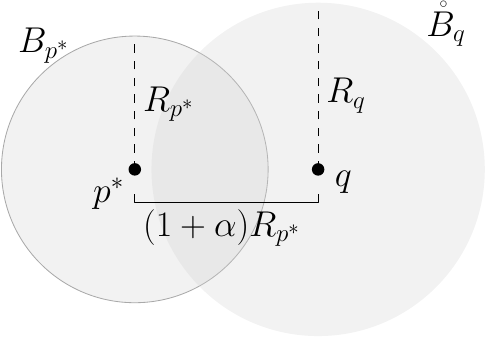}
				\vspace{-15pt}
			\end{center}
			\vspace{-30pt}
		\end{wrapfigure}
	
		Set $\beta=\sqrt{2}-1$, and notice that $\beta=\frac{1}{2+\beta}$. 
		Consider some choice $q\in X$, and let
		$\alpha\ge-1$ be such that $d(p^*,q)=(1+\alpha)\cdot R_{p^*}$. Let $\mathring{B}_q=\{v\in V\mid d(q,v)< R_q\}$ be the (multi) subset of voters at distance (strictly) smaller than $R_q$ from $q$ (i.e., those that are contained in the open ball of radius $R_q$ centered at $q$).
		Consider the following cases:
		\begin{itemize}
			
			\item $\boldsymbol{\alpha\le\beta}$: For every point $v\notin \mathring{B}_q$, as $d(q,v)\ge R_q\ge R_{p^*}$, by the triangle inequality it holds that $$d(p^*,v)\le d(p^*,q)+d(q,v)\le (2+\alpha)\cdot d(q,v)\le (2+\beta)\cdot d(q,v)=\frac1\beta\cdot d(q,v)~.$$ 
			
			\item $\boldsymbol{\alpha\ge\beta}$: For every point $v\in B_{p^*}$, as $d(p^*,q)=(1+\alpha)\cdot R_{p^*}\ge (1+\alpha)\cdot d(p^*,v)$, it holds that $$d(q,v)\ge d(q,p^*)-d(p^*,v)\ge (1+\alpha-1)\cdot d(p^*,v)\ge\beta\cdot d(p^*,v)~.$$
		\end{itemize}
		The theorem follows as $|\mathring{B}_q|< \frac n2\le |B_{p^*}|$.
	\qed
	
	\begin{theorem}\label{thm:LB}
		There exist a metric space $(X,d)$ such that $\beta^*_{(X,d)}= \frac12$.
	\end{theorem}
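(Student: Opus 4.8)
The plan is to realize the bound with $(X,d)$ the circle of circumference $1$, namely $X=\R/\mathbb{Z}$ with the geodesic metric $d(x,y)=\min_{k\in\mathbb{Z}}|x-y+k|$ (a continuous graph metric containing a cycle), proving $\beta^*_{(X,d)}\le\frac12$ and $\beta^*_{(X,d)}\ge\frac12$ separately. For the upper bound, take $V_0=\{v_1,v_2,v_3\}$ to be three voters forming an equilateral triangle, at $0,\frac13,\frac23$, so every two are at distance $\frac13$. A direct computation (reduce to an arc of length $\frac16$ using the dihedral symmetry of $V_0$) shows that for every $p\in X$ the two voters $v,v'$ of $V_0$ farthest from $p$ satisfy $d(p,v)+d(p,v')=\frac23$. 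Hence for any $\beta>\frac12$, $\beta\, d(p,v)+\beta\, d(p,v')=\frac{2\beta}{3}>\frac13=d(v,v')$, so some point $q$ on the geodesic between $v$ and $v'$ has $d(q,v)<\beta\, d(p,v)$ and $d(q,v')<\beta\, d(p,v')$; against this $q$ only the third voter can satisfy $\beta\, d(p,\cdot)\le d(q,\cdot)$, so $p$ is not a $\beta$-plurality point. As $p$ and $\beta>\frac12$ were arbitrary, $\beta_{(X,d)}(V_0)\le\frac12$, hence $\beta^*_{(X,d)}\le\frac12$.

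For the lower bound, fix an arbitrary multiset $V$, $|V|=n$, and (as in the proof of \Cref{thm:main}) let $p^*$ minimize $R_p$, the least radius with $|B_V(p,R_p)|\ge\frac n2$. Note $R_{p^*}\le\frac14$: for any $x$, the closed balls of radius $\frac14$ about $x$ and about the antipode $x+\frac12$ cover $X$, so one carries $\ge\frac n2$ voters. I claim $p^*$ is a $\frac12$-plurality point. Fix $q$, set $T=d(p^*,q)$, and let $\mathrm{Lost}=\{v\mid 2\,d(q,v)<d(p^*,v)\}$; I must show $|\mathrm{Lost}|<\frac n2$. From $2\,d(q,v)<d(p^*,v)\le T+d(q,v)$, every lost voter lies within distance $T$ of $q$. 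If $T\le R_{p^*}$, then $\mathrm{Lost}\subseteq\{v\mid d(q,v)<R_{p^*}\}\subseteq\{v\mid d(q,v)<R_q\}$, which has fewer than $\frac n2$ voters by definition of $R_q$; if $T\ge\frac32R_{p^*}$, then every $v\in B_V(p^*,R_{p^*})$ has $d(q,v)\ge T-d(p^*,v)\ge\frac12R_{p^*}\ge\frac12 d(p^*,v)$, so the $\ge\frac n2$ voters of $B_V(p^*,R_{p^*})$ avoid $\mathrm{Lost}$. These are exactly the two cases already covered (with $\beta=\frac12$) in \Cref{thm:main}.

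The remaining case $R_{p^*}<T<\frac32R_{p^*}$ is the crux, and it is here that \Cref{thm:main}'s argument has a gap (it handles $\alpha=T/R_{p^*}-1$ only for $\alpha\le0$ or $\alpha\ge\frac12$) that must be closed using the one-dimensional geometry of the circle. Since $T<\frac32\cdot\frac14<\frac12$, place coordinates with $q=0$ and $p^*=T$ along the short arc, and parametrize a voter by $y\in(-\frac12,\frac12]$, so $d(q,y)=|y|$ and $d(p^*,y)=\min(|y-T|,1-|y-T|)$. Solving $2|y|<d(p^*,y)$ piecewise on $y\in(0,T)$, $y\in[T,\frac12]$, and $y\in(-\frac12,0)$ (the last one split at $|y|=\frac12-T$) shows that $\mathrm{Lost}$ always lies in the arc $(-T,\frac T3)$ of length $\frac{4T}{3}$, and, when $T>\frac14$, in the smaller arc $(-\frac{1-T}{3},\frac T3)$ of length $\frac13$; in either case the enclosing arc has length $<2R_{p^*}$ (when $T\le\frac14$ since $\frac{4T}{3}<\frac43\cdot\frac32R_{p^*}=2R_{p^*}$, and when $T>\frac14$ since then $T<\frac32R_{p^*}$ forces $R_{p^*}>\frac16$, so $\frac13<2R_{p^*}$). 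An arc of length $\ell<2R_{p^*}$ is the closed ball of radius $\frac\ell2<R_{p^*}$ about its midpoint, which by minimality of $R_{p^*}$ contains fewer than $\frac n2$ voters; hence $|\mathrm{Lost}|<\frac n2$. Thus $p^*$ is a $\frac12$-plurality point, so $\beta_{(X,d)}(V)\ge\frac12$ for every $V$, giving $\beta^*_{(X,d)}\ge\frac12$ and, with the upper bound, $\beta^*_{(X,d)}=\frac12$. The delicate part of the whole proof is precisely this piecewise analysis of $\mathrm{Lost}$ — in particular the ``wrap-around'' region near the antipode of $q$, where $d(p^*,\cdot)$ changes branch — together with verifying that the resulting arc-length estimates stay below $2R_{p^*}$ throughout the open interval $(R_{p^*},\frac32R_{p^*})$.
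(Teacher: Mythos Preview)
Your proof is correct and follows essentially the same route as the paper: the same metric space (the unit-circumference circle), the same three-voter equilateral configuration for the upper bound, and the same ``minimum $R_p$ center is a $\tfrac12$-plurality point via an arc-length bound'' argument for the lower bound. The only differences are organizational: for the upper bound you use the clean inequality $\beta\,d(p,v)+\beta\,d(p,v')>d(v,v')$ rather than writing down an explicit $q$, and for the lower bound you center coordinates at $q$ and split into three ranges of $T$ (with a further $T\lessgtr\tfrac14$ split), whereas the paper centers at $p$ and uses just two ranges of $q$; your arc $(-T,\tfrac{T}{3})$ is the paper's arc $(\tfrac{2}{3}q,2q)$ in different coordinates, and your case $T\le R_{p^*}$ is subsumed by the arc argument, so the three-way split could be collapsed to the paper's two.
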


	\emph{Proof.}~
	Consider the metric space $(X,d)$, where $X$ denotes the set of points on a circle of perimeter $1$ and distances are measured along the arcs. More formally,
	$X$ is the segment $[0,1)$, and given two points $x,y\in [0,1)$, their distance is $d(x,y)=\min\{(x-y)\modu 1,(y-x)\modu 1\}$.
	
	First we show that $\beta^*_{(X,d)}\le \frac12$.
	Consider a set of three voters $\{v_1,v_2,v_3\}=\{0,\frac13,\frac23\}$, all at distance $\frac13$ from each other. We will show that $\beta^*_{(X,d)}(V)\le \frac12$.
	Assume by contradiction that there is a choice $p$ which is a $\beta$-plurality point for $\beta>\frac12$.
	Assume w.l.o.g. that $p=\alpha\in[0,\frac16]$ (see the figure below for illustration), the other cases are symmetric.
	
	\begin{wrapfigure}{r}{0.2\textwidth}
		\begin{center}
			\vspace{-30pt}
			\includegraphics[scale=0.4]{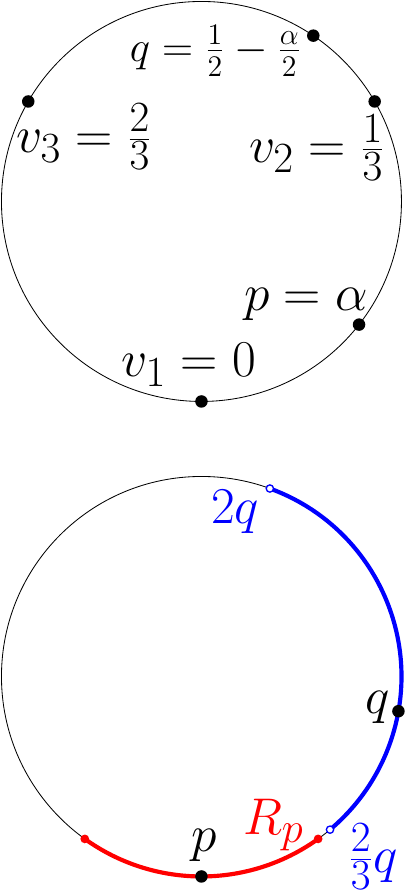}
			\vspace{-5pt}
		\end{center}
		\vspace{-15pt}
	\end{wrapfigure}	

	Consider the choice  $q=\frac12-\frac{\alpha}{2}$ lying on the arc $[v_2,v_3]$ at distance $\frac16-\frac{\alpha}{2}$ from $v_2$, and $\frac16+\frac\alpha2$ from $v_3$. Then $\beta\cdot d(p,v_2)=\beta\cdot (\frac13-\alpha)> \frac16-\frac\alpha2= d(q,v_2)$ and ${\beta\cdot d(p,v_3)=\beta\cdot (\frac13+\alpha)> \frac16+\frac\alpha2= d(q,v_3)}$, which contradicts the assumption that $p$ is a $\beta$-plurality point.
	
	Next we show that $\beta^*_{(X,d)}\ge \frac12$. Consider an arbitrary (multi) subset of voters $V\subseteq X$, and let $p\in X$ be the choice with minimal radius $R_p$ such that $|B_V(p,R_p)|\ge\frac n2$.
	Note that the length of the smallest arc containing $\frac n2$ voters is $2R_p$. In particular, by averaging arguments $2R_p\le \frac12$, and thus $R_p\le\frac14$. 
	Assume w.l.o.g. that $p=0$. We show that $p$ is a $\frac12$-plurality point. Let $q\in X$ be any other point. We assume that $q\in(0,\frac12]$, the case $q\in [\frac12,1)$ is symmetric. 
	Let $v$ be a voter that prefers $q$ over $p$, then $\frac12 d(p,v)>d(q,v)$. 
	If $v<q$ then $\frac12 d(p,v)=\frac12 v$ and $d(q,v)=q-v$, and thus $v>\frac23q$. Else, we have $v>q$, and so $\frac12 d(p,v)\le \frac12 v$ and $d(q,v)=v-q$ (as otherwise the shortest path from $v$ to $q$ goes through $p$, implying $d(p,v)<d(q,v)$), and therefore $v<2q$.
	We conclude that only voters in the arc $(\frac23q,2q)$ prefer $q$ over $p$. The rest is case analysis:
	\begin{itemize}
		\item If $q<\frac32 R_p$, then the arc containing the set of the voters preferring $q$ over $p$ is of length $\frac43 q<2R_p$. By the minimality of $R_p$, it contains less than $\frac n2$ voters.
		\item If $q\ge\frac32 R_p$, then the arc $[0,R_p]$ is disjoint from the arc $(\frac23q,2q)$. Moreover, as $q<\frac12$, all the voters in the arc $[1-R_p,1)\subseteq [\frac34,1)$ will prefer $p$ over $q$. 
		In particular there are at least $\frac n2$ voters preferring $p$ over $q$. 
	\end{itemize}
    \qed
\newline

Given a weighted graph $G=(V,E,w)$, consider its \emph{continuous counterpart}, denoted $\tilde{G}$: each edge $e=(v,u)$ in $G$ is represented in $\tilde{G}$ by a an interval of length $w(e)$, equipped with the line metric with endpoints $u,v$. 
The distance between two points $u,v\in\tilde{G}$, denoted $d_{\tilde{G}}(u,v)$, is the shortest length of a geodesic path connecting $u$ to $v$.

If $G$ contains a cycle, then we can generalize \Cref{thm:LB} to $\tilde{G}$.
This shows a separation between metric spaces obtained by acyclic graphs (trees) which always contain a plurality point (that is, $\beta^*_{(X,d)}= 1$), and metric spaces obtained by all other graphs, for which $\beta^*_{(X,d)}\le\frac12$.
\begin{restatable}{theorem}{LBgraphs}
	\label{thm:LBgenerlized}
	For every weighted graph $G=(V,E,w)$ containing a cycle, it holds that $\beta^*_{(\tilde{G},d_{\tilde{G}})}\le \frac12$.
\end{restatable}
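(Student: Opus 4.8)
The plan is to reduce the statement for an arbitrary weighted graph $G$ containing a cycle to the essential feature that made \Cref{thm:LB} work: the presence of an isometrically (or near-isometrically) embedded cycle in the continuous space $\tilde G$. First I would define the continuous counterpart $\tilde G$ carefully — each edge $e=(u,v)$ of weight $w(e)$ is replaced by a segment of length $w(e)$, and the distance $d_{\tilde G}$ between two points (possibly interior to edges) is the shortest-path (geodesic) distance in this one-dimensional complex. Since $G$ contains a cycle, $\tilde G$ contains a closed curve; among all cycles in $G$, pick one, say $\mathcal C$, and let $L$ be its total length. The subspace traced by $\mathcal C$ in $\tilde G$ is a topological circle of length $L$, but crucially it need not be isometrically embedded, since a pair of points on $\mathcal C$ might be connected by a shorter path through the rest of the graph. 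To handle this I would instead choose $\mathcal C$ to be a shortest cycle (girth cycle, in the weighted sense) of $G$; for a shortest cycle, the subcurve of $\mathcal C$ between any two of its points is a geodesic of $\tilde G$ as long as its length is at most $L/2$, so the circle of length $L$ sits isometrically in $\tilde G$ in the same sense the length-$1$ cycle did in \Cref{thm:LB}.

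Next I would place three voters $v_1,v_2,v_3$ on this embedded circle at the three points that divide $\mathcal C$ into arcs of length $L/3$ each, exactly mirroring the construction $\{0,\tfrac13,\tfrac23\}$ in the proof of \Cref{thm:LB}. The goal is to show $\beta_{(\tilde G,d_{\tilde G})}(V)\le\tfrac12$, i.e., that for every candidate point $p\in\tilde G$ and every $\beta>\tfrac12$, $p$ fails to be a $\beta$-plurality point. The argument splits according to where $p$ lies. If $p$ lies on the circle $\mathcal C$, then by the rotational symmetry of the three voters we may assume $p$ is within arc-distance $L/6$ of the midpoint of the arc $v_2v_3$ — wait, more precisely, within $L/6$ of some voter — and we can replay the exact computation from \Cref{thm:LB} (rescaled by $L$): take the antipodal-type point $q$ on the arc between the two ``far'' voters, and verify $\beta\cdot d(p,v_i)>d(q,v_i)$ for those two voters, giving two out of three voters strictly preferring $q$. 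If instead $p$ lies off the circle $\mathcal C$ (on some other part of $\tilde G$), I would argue that projecting $p$ onto $\mathcal C$ (taking the closest point $p'$ of $\mathcal C$ to $p$) only helps: $d(p,v_i)\ge d(p',v_i)$ for each voter, so if the on-circle point $p'$ already admits a defeating $q$, so does $p$. This reduces everything to the on-circle case.

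The main obstacle I anticipate is the geometry of points off the cycle and the verification that ``projection onto $\mathcal C$ only helps'' — one must be careful that $d_{\tilde G}(p,v_i)$ really is at least $d_{\tilde G}(p',v_i)$, which needs that every geodesic from $p$ to $v_i\in\mathcal C$ passes through (or can be rerouted through) the projection point $p'$; this is a standard fact about closest-point projections onto a geodesically convex or, here, at least a ``reasonably embedded'' subset, but since $\mathcal C$ is not convex in $\tilde G$ one has to phrase it as: the geodesic from $p$ to $v_i$ first reaches $\mathcal C$ at some point $c$, and then $d(p,v_i)=d(p,c)+d_{\mathcal C\text{-part}}(c,v_i)\ge d(p',v_i)$ may actually fail if $c\ne p'$. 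A cleaner route, which I would adopt if the projection argument gets delicate, is to avoid it entirely: scale the whole graph so that the shortest cycle has length exactly $1$ and note that any geodesic between two points of the unit cycle has length at most $1/2$, then observe that for the three equally spaced voters the candidate $q$ we chose lies on $\mathcal C$ itself, and the inequalities $\beta d(p,v_i)>d(q,v_i)$ only require upper bounds on $d(q,v_i)$ (which are arc-lengths on $\mathcal C$, hence exact) and lower bounds on $d(p,v_i)$; for the lower bound on $d(p,v_i)$ one uses that the two ``relevant'' voters are at arc-distance more than $L/6$ along $\mathcal C$ from the nearest point of any short arc, combined with the girth lower bound $d(p,v_i)\ge$ (distance in $G$ from the edge containing $p$ to $v_i$), which is bounded below uniformly. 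I would then conclude, as in \Cref{thm:LB}, that no $\beta>\tfrac12$ works, so $\beta^*_{(\tilde G,d_{\tilde G})}\le\tfrac12$.
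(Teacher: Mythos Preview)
Your setup matches the paper exactly: take a shortest (girth) cycle $\mathcal C$, scale so its length is $3$, and place three voters at unit arc-distance. The on-cycle case is fine and is the same as the paper's. The off-cycle case, however, is where your argument is incomplete. You are right that the closest-point projection $p\mapsto p'\in\mathcal C$ does \emph{not} satisfy $d(p,v_i)\ge d(p',v_i)$ in general, so that route is dead; but the alternative you sketch (``upper bounds on $d(q,v_i)$ are exact arc-lengths, lower bounds on $d(p,v_i)$ come from the girth'') never specifies which $q$ to use or which two voters it beats, and ``bounded below uniformly'' is not a quantity. As stated, that paragraph does not yield $\beta d(p,v_i)>d(q,v_i)$ for two voters.

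The paper fills this gap with a short case split keyed to the structure of geodesics from $p$. Let $v_1$ be the voter closest to $p$ and $\alpha=d(p,v_1)$. If $v_1$ lies on the shortest paths from $p$ to both $v_2$ and $v_3$, then $d(p,v_2),d(p,v_3)\ge 1$, and the midpoint $q$ of the $v_2v_3$ arc (distance $\tfrac12$ to each) beats $p$ at both $v_2,v_3$. Otherwise, say the geodesic from $p$ to $v_2$ avoids $v_1$; then the concatenation $v_1\!-\!p\!-\!v_2$ together with the unit arc $v_2v_1$ would form a cycle of length $\alpha+d(p,v_2)+1$, and minimality of $\mathcal C$ forces $d(p,v_2)\ge 2-\alpha$. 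Now take $q$ on the $v_1v_2$ arc at distance $\tfrac{\alpha}{2}$ from $v_1$ (hence $1-\tfrac{\alpha}{2}$ from $v_2$); for $\beta>\tfrac12$ one gets $\beta\alpha>\tfrac{\alpha}{2}$ and $\beta(2-\alpha)>1-\tfrac{\alpha}{2}$, so $q$ wins $v_1$ and $v_2$. This is exactly the ``girth lower bound'' you gestured at, but the choice of $q$ depends on the case, and that is the piece your write-up is missing.
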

\begin{proof}
	Let $C$ be a cycle in $G$ of minimum length. Assume w.l.o.g. that the length of $C$ is $1$. We place $3$ voters $v_1,v_2,v_3$ on $\tilde{C}$ at equal distance (of $\frac13$) from each other.
	Assume by contradiction that there is a choice $p$ which is a $\beta$-plurality point for $\beta>\frac12$. Furthermore, assume w.l.o.g. that $v_1$ is the voter closest to $p$, and let $d_{\tilde{G}}(p,v_1)=\alpha$. We proceed by case analysis, see \Cref{fig:continuesCounterpart} for an illustration:
	\begin{figure}[h]
		\centering
		\includegraphics[scale=0.72]{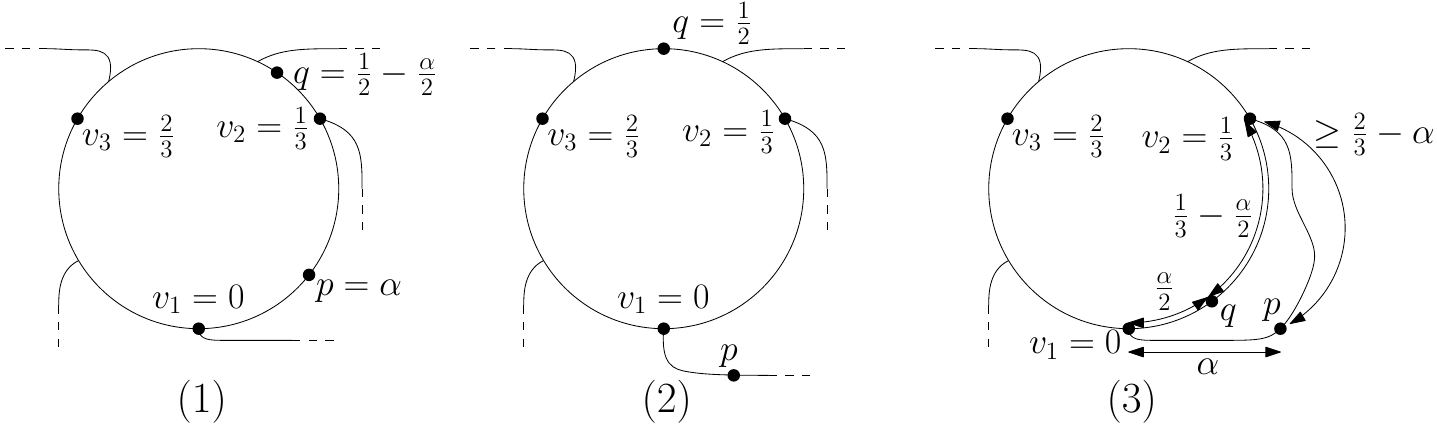}
		\caption{Illustration of the three cases in the proof of \Cref{thm:LBgenerlized}.}
		\label{fig:continuesCounterpart}
	\end{figure}
	\begin{itemize}
		\item \textbf{Case (1).} $p$ lies on the cycle $\tilde{C}$. As $C$ is a cycle of minimum length in $G$, $\tilde{C}$ contains the shortest paths between all $v_1,v_2,v_3$ in $\tilde{G}$ (otherwise there would've been a shorter cycle).
		Following the same argument as in \Cref{thm:LB}, for every possible placement of $p$, there is a choice $q\in \tilde{C}$ that will win over $2$ voters, a contradiction.
		
		\item \textbf{Case (2).}  $p\notin \tilde{C}$, and $v_1$ lies on the shortest paths from $p$ to both $v_2,v_3$.
		Then we have $d_{\tilde{G}}(p,v_2)\ge \frac13$ and $ d_{\tilde{G}}(p,v_3)\ge \frac13$. Consider the choice $q$ lying at distance $\frac16$ from both $v_2,v_3$. Then $q$ will win two voters over $p$, a contradiction.
		
		\item \textbf{Case (3).}  $p\notin \tilde{C}$, and $v_1$ does not lies on the shortest paths from $p$ to both $v_2,v_3$. Suppose  w.l.o.g. that the shortest path from $p$ to $v_2$ does not go through $v_1$, and let $\kappa=d_G(p,v_2)$.
		Since $v_1$ is the voter closest to $p$, there are two different paths in $\tilde{G}$ from $p$ to $v_2$ of lengths $\kappa$ and $\frac13+\alpha$. In particular, $G$ contains a cycle of length at most $\kappa+\frac13+\alpha$. As $C$ is the minimum cycle in $G$, and it is of length $1$, it follow that $\kappa\ge \frac23-\alpha$.
		Let $q$ be the point on $C$ at distance $\frac\alpha2$ from $v_1$ and $\frac13-\frac\alpha2$ from $v_2$. Note that $q$ wins both the votes of $v_1$ and $v_2$ over $p$, a contradiction.
	\end{itemize}

\end{proof}

	\section{Euclidean Space}
	In this section we consider the case of the Euclidean metric space, and give a bound on $\beta^*_{(\R^d,\|\cdot\|_2)}$ which is independent of $d$ and greater than $\frac{1}{\sqrt{d}}$ for any $d\ge 4$, thus improving the lower bound of \cite{ABGH21} for $d\ge 4$.
	As this entire section deals only with Euclidean space, in order to simplify notation, in this section (and the related \Cref{appballObservation,app:ChooseBeta}) we will drop the subscript from   $\|\cdot\|_2$ (writing $\|\cdot\|$), and from $B_{\R^d}(\vec{x},r)$ (writing $B(\vec{x},r)$).
	\begin{theorem}\label{thm:Euclidean}
		For Euclidean space of arbitrary dimension,  $\beta^*_{(\R^d,\|\cdot\|_2)}\ge\beta$,\\\phantom{a}\hfill for $\beta=\frac{1}{2}\sqrt{\frac{1}{2}+\sqrt{3}-\frac{1}{2}\sqrt{4\sqrt{3}-3}}\approx0.55701571813579904605525266098621644838064149582041992...$
	\end{theorem}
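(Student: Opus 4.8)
The plan is to follow the blueprint of Theorem~\ref{thm:main}: let $p^*$ minimize $R_p$ over $p\in\R^d$, set $R:=R_{p^*}$, and show that $p^*$ is a $\beta$-plurality point for the $\beta$ in the statement. Two facts come for free from minimality: $|B_V(p^*,R)|\ge \tfrac n2$, and every closed ball (around \emph{any} center) of radius $<R$ contains fewer than $\tfrac n2$ voters. Fix a competing point $q$ and write $\|p^*-q\|=(1+\alpha)R$. The set of voters strictly preferring $q$, namely $\{v:\beta\|p^*-v\|>\|q-v\|\}$, is obtained by completing the square in $\|v-q\|^2\le\beta^2\|v-p^*\|^2$: it is an open Euclidean ball $A_q$ whose center lies on the ray from $p^*$ through $q$ at distance $\tfrac{1+\alpha}{1-\beta^2}R$ from $p^*$, with radius $\rho=\tfrac{\beta(1+\alpha)}{1-\beta^2}R$; its nearest point to $p^*$ is at distance $\tfrac{1+\alpha}{1+\beta}R$, its farthest point at distance $\tfrac{1+\alpha}{1-\beta}R$, and $A_q$ lies inside the open halfspace through $p^*$ with outward normal $q-p^*$. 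It suffices to show $|A_q\cap V|\le\tfrac n2$ for every $q$.

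Two regimes are immediate. If $\alpha\ge\beta$, then every $v\in B_V(p^*,R)$ satisfies $\|q-v\|\ge\|p^*-q\|-\|p^*-v\|\ge(1+\alpha)\|p^*-v\|-\|p^*-v\|=\alpha\|p^*-v\|\ge\beta\|p^*-v\|$, so at least $\tfrac n2$ voters (weakly) prefer $p^*$. If $\rho<R$, i.e.\ $\alpha<\tfrac{1-\beta-\beta^2}{\beta}$, then $A_q$ is a ball of radius $<R$, so $|A_q\cap V|<\tfrac n2$. These two ranges cover all $\alpha$ precisely when $2\beta^2+\beta-1\le0$, i.e.\ $\beta\le\tfrac12$; for the $\beta$ of the theorem they leave the interval $\tfrac{1-\beta-\beta^2}{\beta}\le\alpha<\beta$ uncovered, and the whole difficulty is to handle queries with $\alpha$ in this gap.

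In the gap $A_q$ has radius $\ge R$ yet still protrudes into $B_V(p^*,R)$ from the $q$-side, so no single ball of radius $<R$ captures all voters preferring $q$, and a genuinely Euclidean, more refined argument is needed. The plan is to split the voters of $A_q$ into those lying in $\overline B(p^*,R)$ and those outside it. For the first group I would use an additional structural property of $p^*$, again a consequence of the minimality of $R$: for every closed halfspace $H$ whose boundary hyperplane passes through $p^*$, one has $|V\cap\overline B(p^*,R)\cap H|<\tfrac n2$ --- otherwise translating $p^*$ infinitesimally into the complementary halfspace strictly shrinks a ball enclosing those $\ge\tfrac n2$ voters, contradicting minimality (the only obstruction, voters lying exactly on $\partial B(p^*,R)\cap\partial H$, is non-generic and removable by an arbitrarily small perturbation of $V$ or a limiting argument). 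Since $A_q$ sits inside the open halfspace through $p^*$ normal to $q-p^*$, this first group has fewer than $\tfrac n2$ voters. For the second group, the protruding voters lie in the thin annulus $\{x:R<\|x-p^*\|\le\tfrac{1+\alpha}{1-\beta}R\}$ and simultaneously within distance $\tfrac{\beta(1+\alpha)}{1-\beta}R$ of $q$; I would bound their number by combining this Apollonius geometry with a second invocation of the minimality of $R$ (and, if needed, of $R_q$, using $R_q\ge R$).

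The main obstacle is that these two bounds must be made to \emph{combine} to at most $\tfrac n2$, not merely to be individually below $\tfrac n2$; this is exactly where the geometry of $\R^d$ is used, and the competition between the two groups --- governed by how large a spherical cap of $\partial B(p^*,R)$ the ball $A_q$ can reach as $\alpha$ ranges over the gap --- must be balanced against the adversary's placement of the ``surplus'' voters. The value $\beta=\tfrac12\sqrt{\tfrac12+\sqrt3-\tfrac12\sqrt{4\sqrt3-3}}$ should be exactly the threshold at which this worst-case balance still yields a count $\le\tfrac n2$ throughout the gap; I expect it to emerge as the root of a small system of polynomial equations (hence the nested radical), and carrying out that optimization carefully --- rather than any individual geometric step --- is the real work of the proof.
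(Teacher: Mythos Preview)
Your proposal has a genuine gap, and the underlying strategy is different from---and weaker than---the paper's.

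You commit to showing that the single point $p^*$ minimizing $R_p$ is itself a $\beta$-plurality point for $\beta\approx 0.557$. But your own case analysis already shows why this cannot be pushed past $\beta=\tfrac12$: the two ``easy'' regimes $\rho<R$ and $\alpha\ge\beta$ together cover all $\alpha$ precisely when $\beta\le\tfrac12$, and everything beyond that is left as a plan. That plan (split $A_q\cap V$ into the part inside $\overline B(p^*,R)$ and the part outside, bound each, then somehow \emph{combine}) has no mechanism for the combination step: the halfspace property---even granting it---only says Part~1 $<\tfrac n2$, and minimality of $R$ gives no usable upper bound on Part~2, let alone a trade-off between the two. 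You acknowledge this is ``the real work'' but do not do it; and in fact it cannot be done at the level of $p^*$ alone. The paper explicitly records (see the acknowledgments and the remark after equation~(\ref{eq:pq})) that the one-candidate argument yields exactly $\beta=\tfrac12$: for $\beta=\tfrac12$ the interval $(\tfrac1\beta,\tfrac1{1-\beta})$ is empty, so $\vec p$ is always a $\tfrac12$-plurality point; for any $\beta>\tfrac12$ it is nonempty and $\vec p$ may fail.

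The missing idea is a \emph{second} candidate. The paper's proof does not insist that $\vec p$ succeed; if it fails, \Cref{cor:ballObservation} supplies a witness $\vec q$ with $\tfrac1\beta<q<\tfrac1{1-\beta}$ whose open ball $B(\vec q,\beta q)$ already contains more than $\tfrac n2$ voters. One now has \emph{two} rich balls, $B(\vec p,1)$ and $B(\vec q,\beta q)$, and a new candidate $\vec w$ is placed on the segment $\vec p\vec q$ (at an explicit coordinate depending on $q$). The key \Cref{lem:wWinner} shows that for every challenger $\vec z$, the Apollonius ball $B(\vec z,\beta z)$ is either of radius $\le 1$, or disjoint from $B(\vec p,1)$, or disjoint from $B(\vec q,\beta q)$; in each case it holds at most $\tfrac n2$ voters. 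The constant $\tfrac12\sqrt{\tfrac12+\sqrt3-\tfrac12\sqrt{4\sqrt3-3}}$ is exactly the largest $\beta$ for which this trichotomy can be enforced for all $q$ in $(\tfrac1\beta,\tfrac1{1-\beta})$ (equation~(\ref{eq:ChoiseBeta})); it does not arise from any single-ball balance of the sort you sketch.
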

	
	We begin with the following structural observation regarding the Euclidean space. 
	\begin{claim}\label{clm:ballObservation}
		Fix a pair of choices $\vec{a},\vec{b}\in\mathbb{R}^d$.
		For any $\beta\in(0,1)$, the set of all voters $\vec{v}\in V$ that do not $\beta$-prefer $\vec{a}$ over $\vec{b}$, i.e., $\left\{ \vec{v}\in V \mid \beta\cdot\left\| \vec{a}-\vec{v}\right\| >\| \vec{b}-\vec{v}\| \right\}$, is contained in the open ball centered at $\vec{o}=\vec{a}+\frac{1}{1-\beta^{2}}\cdot(\vec{b}-\vec{a})$ with radius $\beta\cdot \|\vec{o}-\vec{a}\|$.
	\end{claim}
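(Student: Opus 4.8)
The plan is to recognize the boundary of the region in question as a sphere of Apollonius and to pin down its center and radius by a direct completion of the square. Since the statement is about \emph{every} point $\vec v$ satisfying the inequality (the voters being merely a sub(multi)set of such points), it suffices to show that $\left\{\vec v\in\mathbb R^d \mid \beta\cdot\|\vec a-\vec v\|_2 > \|\vec b-\vec v\|_2\right\}$ is exactly the asserted open ball.

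First I would translate coordinates so that $\vec a$ becomes the origin, writing $\vec b$ also for the translated second candidate; the claimed center then reads $\vec o=\tfrac{1}{1-\beta^2}\vec b$. A point $\vec v$ lies in the set iff $\beta\|\vec v\|_2>\|\vec v-\vec b\|_2$, and since both sides are nonnegative this is \emph{equivalent} to $\beta^2\|\vec v\|_2^2>\|\vec v-\vec b\|_2^2$ (this equivalence, not just an implication, is one of the two small points to be careful about). Expanding $\|\vec v-\vec b\|_2^2=\|\vec v\|_2^2-2\langle\vec v,\vec b\rangle+\|\vec b\|_2^2$ and collecting terms turns the condition into $(1-\beta^2)\|\vec v\|_2^2-2\langle\vec v,\vec b\rangle+\|\vec b\|_2^2<0$.

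The key step is to divide by $1-\beta^2>0$ — this is exactly where the hypothesis $\beta<1$ is used, and it is why the inequality does not flip (the second point to watch) — and complete the square in $\vec v$ around $\vec o=\tfrac{1}{1-\beta^2}\vec b$. Using $\|\vec v-\vec o\|_2^2=\|\vec v\|_2^2-\tfrac{2}{1-\beta^2}\langle\vec v,\vec b\rangle+\|\vec o\|_2^2$, the condition becomes $\|\vec v-\vec o\|_2^2<\|\vec o\|_2^2-\tfrac{1}{1-\beta^2}\|\vec b\|_2^2$, and a one-line simplification of the right-hand side, via $\|\vec o\|_2^2=\tfrac{1}{(1-\beta^2)^2}\|\vec b\|_2^2$, gives $\tfrac{\beta^2}{(1-\beta^2)^2}\|\vec b\|_2^2=\beta^2\|\vec o\|_2^2$. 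Hence the set is precisely the open ball of radius $\beta\|\vec o\|_2=\beta\|\vec o-\vec a\|_2$ centered at $\vec o$ (translating back), which is the claim.

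There is essentially no genuine obstacle here: the argument is an exact algebraic identity, and the only care needed is in the sign of $1-\beta^2$ and in noting that squaring the original inequality is reversible because both sides are nonnegative. One could instead simply invoke the classical description of $\{\vec v:\|\vec v-\vec b\|_2/\|\vec v-\vec a\|_2=\beta\}$ as a sphere of Apollonius and read off its parameters, but carrying out the completion of the square keeps the proof self-contained and makes the constants transparent for their later use in bounding $\beta^*_{(\mathbb R^d,\|\cdot\|_2)}$.
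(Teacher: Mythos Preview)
Your proof is correct and follows essentially the same approach as the paper: translate $\vec a$ to the origin, square the inequality, and complete the square to identify the Apollonius ball. The only cosmetic difference is that the paper additionally rotates so that $\vec b$ lies along $e_1$ and works coordinatewise, whereas you keep the computation coordinate-free via inner products; your version also notes explicitly that the set \emph{equals} (not merely is contained in) the open ball, which is slightly more than the claim requires but harmless.
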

	This claim was previously known (in fact, the ball is the ``bisector'' used in a multiplicatively weighted Voronoi diagram, see \cite{AE84}). For completeness, we provide  a full proof of the claim in \Cref{appballObservation}, using the notations of our paper.
	
	By the above claim we can conclude:
	\begin{corollary}\label{cor:ballObservation}
		For any $\beta\in(0,1)$, $\vec{a}$ is a $\beta$-plurality point if and only if, for every other point $\vec{o}\in\R^d$, the open ball of radius $\beta\cdot\|\vec{o}-\vec{a}\|$ around $\vec{o}$ contains at most $\frac n2$ voters. 
	\end{corollary}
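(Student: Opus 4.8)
The plan is to translate Definition~\ref{def:Plurpoint} into its complementary (counting) form and then feed Claim~\ref{clm:ballObservation} one competing point at a time. First I would record the elementary reformulation: every voter $\vec v$ either satisfies $\beta\cdot\|\vec a-\vec v\|_2\le\|\vec q-\vec v\|_2$ or $\beta\cdot\|\vec a-\vec v\|_2>\|\vec q-\vec v\|_2$, so the requirement ``$|\{\vec v\in V:\beta\|\vec a-\vec v\|_2\le\|\vec q-\vec v\|_2\}|\ge n/2$'' is equivalent to ``$|\{\vec v\in V:\beta\|\vec a-\vec v\|_2>\|\vec q-\vec v\|_2\}|\le n/2$''. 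Hence $\vec a$ is a $\beta$-plurality point if and only if, for every $\vec q\in\R^d$, the set of voters that do not $\beta$-prefer $\vec a$ over $\vec q$ has size at most $n/2$. The case $\vec q=\vec a$ is trivial since that set is empty (as $\beta<1$), so only $\vec q\ne\vec a$ is relevant.

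Next I would set up the bijection between competing points $\vec q\ne\vec a$ and ball centers $\vec o\ne\vec a$, via $\vec o=\vec a+\frac{1}{1-\beta^{2}}(\vec q-\vec a)$ with inverse $\vec q=\vec a+(1-\beta^{2})(\vec o-\vec a)$; note that $\beta\cdot\|\vec o-\vec a\|_2$ is exactly the radius appearing in Claim~\ref{clm:ballObservation}. For the ``if'' direction, assume every open ball $B(\vec o,\beta\|\vec o-\vec a\|_2)$ with $\vec o\ne\vec a$ contains at most $n/2$ voters. Given any $\vec q\ne\vec a$, Claim~\ref{clm:ballObservation} places the set of voters not $\beta$-preferring $\vec a$ over $\vec q$ inside $B(\vec o,\beta\|\vec o-\vec a\|_2)$ for the corresponding $\vec o$, which by hypothesis has at most $n/2$ voters; together with the $\vec q=\vec a$ case, this establishes that $\vec a$ is a $\beta$-plurality point.

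For the ``only if'' direction, fix $\vec o\ne\vec a$ and set $\vec q=\vec a+(1-\beta^{2})(\vec o-\vec a)\ne\vec a$. Here I need the reverse of the inclusion in Claim~\ref{clm:ballObservation}, namely $B(\vec o,\beta\|\vec o-\vec a\|_2)\subseteq\{\vec x\in\R^d:\beta\|\vec a-\vec x\|_2>\|\vec q-\vec x\|_2\}$; in fact the two sets coincide, since $\{\vec x\in\R^d:\beta\|\vec a-\vec x\|_2>\|\vec q-\vec x\|_2\}$ is precisely the interior of the Apollonius sphere of $\vec a,\vec q$ with ratio $\beta$, and completing the square (exactly as in the proof of Claim~\ref{clm:ballObservation}) identifies that interior with $B(\vec o,\beta\|\vec o-\vec a\|_2)$ under the substitution above. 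Granting this, every voter in $B(\vec o,\beta\|\vec o-\vec a\|_2)$ fails to $\beta$-prefer $\vec a$ over $\vec q$, so if $\vec a$ is a $\beta$-plurality point then $|V\cap B(\vec o,\beta\|\vec o-\vec a\|_2)|\le n/2$, which is what the corollary asserts.

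The step I expect to be the main obstacle is precisely this reverse inclusion: Claim~\ref{clm:ballObservation} is stated only as a containment, so obtaining the ``only if'' half of the equivalence requires either reusing the computation in its proof to observe that the containment is an equality, or carrying out a short direct Apollonius-sphere calculation verifying $B(\vec o,\beta\|\vec o-\vec a\|_2)\subseteq\{\vec x:\beta\|\vec a-\vec x\|_2>\|\vec q-\vec x\|_2\}$ with $\vec q=\vec a+(1-\beta^{2})(\vec o-\vec a)$. Everything else is bookkeeping between the two dual parametrizations $\vec q\leftrightarrow\vec o$ of this one-parameter family of balls.
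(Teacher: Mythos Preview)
Your proposal is correct and is essentially the argument the paper has in mind; the paper itself gives no explicit proof beyond the sentence ``By the above claim we can conclude,'' so there is no alternative route to compare against. Your observation that the ``only if'' direction needs the \emph{equality} of the Apollonius region with the open ball (whereas Claim~\ref{clm:ballObservation} is phrased only as a containment) is exactly right, and indeed the computation in Appendix~\ref{appballObservation} establishes this equality even though the claim's statement does not.
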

	
	For the remainder of the section, $\beta$ is the number defined in \Cref{thm:Euclidean} and not a general parameter.
	\begin{proof}[Proof of \Cref{thm:Euclidean}]
		Consider a multiset $V\subseteq \R^{d}$ of voters. Let $\vec{p}$ be the point that minimizes $R_{\vec{p}}$. By scaling and shifting, we can assume w.l.o.g. that $R_{\vec{p}}=1$ and $\vec{p}=\vec{0}$.
		If $\vec{p}$ is a $\beta$-plurality point, then we are done.
		Otherwise, by \Cref{cor:ballObservation} there is a point $\vec{q}$ such that the open ball $B\left(\vec{q},\beta\cdot\|\vec{p}-\vec{q}\| \right)$ contains strictly more than $\frac n2$ voters. Let $q=\|\vec{q}\|$.
		Set $\vec{w}= \left(\frac12(1-\beta^2)q-\beta+\frac{3}{2q}\right)\cdot\frac{\vec{q}}{\|\vec{q}\|}$. We claim that $\vec{w}$ is a $\beta$-plurality point.
		
		First, notice that $q>\frac1\beta$, as otherwise the open ball of radius $\beta q\le 1$ around $\vec{q}$ contains more than $\frac n2$ voters, a contradiction to the fact that $R_{\vec{p}}=1$ is the minimum radius of a closed ball containing at least $\frac n2$ voters.
		Second, it must hold that $q<\frac{1}{1-\beta}$, because otherwise
		$\beta q+1\le q$, implying that the ball $B(\vec{p},R_{\vec{p}})$ and the open ball $B(\vec{q},\beta q)$ are disjoint, a contradiction to the fact that the open ball $B(\vec{q},\beta q)$ contains more than $\frac n2$ voters.
		Therefore, we conclude that
		\begin{equation}
		\frac{1}{\beta}< q <\frac{1}{1-\beta}\label{eq:pq}
		\end{equation}
		
		Notice that $\vec{p}$ is a $\frac12$-plurality point. Indeed, we could've fixed $\beta=\frac12$ and have the exact same discussion leading to \cref{eq:pq}. However, as no $q$ satisfies $\frac12<q<\frac12$, it follows that  $\vec{p}$ is a $\frac12$-plurality point.
		
		To prove that $\vec{w}$ is a $\beta$-plurality point, we will show that for every other point $\vec{z}\in\R^d$, the open ball of radius $\beta\cdot\|\vec{z}-\vec{w}\|$ around $\vec{z}$ contains at most $\frac n2$ voters. 
		We will use the following lemma.
				
		\begin{lemma}\label{lem:wWinner}
			For any point $\vec{z}\in \R^d$, let $z=\|\vec{z}-\vec{w}\|$. Then at least one of the following hold:
			\vspace{-5pt}
			\begin{enumerate}
				\item $z\le\frac1\beta$.
				\item $\|\vec{z}-\vec{p}\|\ge1+\beta z$.
				\item $\|\vec{z}-\vec{q}\|\ge \beta q+\beta z$.
			\end{enumerate}
		\end{lemma}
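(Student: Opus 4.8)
The plan is to prove \Cref{lem:wWinner} by contradiction. Suppose there is a point $\vec z$ for which all three alternatives fail, so that $z:=\|\vec z-\vec w\|_2>\frac1\beta$, $\|\vec z-\vec p\|_2<1+\beta z$, and $\|\vec z-\vec q\|_2<\beta q+\beta z$; we aim for a contradiction. Since $\vec p$, $\vec q$, $\vec w$ all lie on the ray from $\vec p$ through $\vec q$, the first move is to fix coordinates in which this line is the first axis: $\vec p=\vec 0$, $\vec q=q\,\vec e_1$, and $\vec w=t\,\vec e_1$ with $t=\frac12(1-\beta^2)q-\beta+\frac{3}{2q}$, and to write $\vec z=x\,\vec e_1+\vec y$ with $\vec y\perp\vec e_1$. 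The point of this step is that the configuration is essentially one-dimensional: once the projection $x$ of $\vec z$ onto the line is fixed, $\|\vec y\|_2^2=z^2-(x-t)^2$ is determined, so all three constraints become statements about the single scalar $x$.

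Concretely, from $\|\vec z-\vec w\|_2^2=(x-t)^2+\|\vec y\|_2^2=z^2$ together with $\|\vec z-\vec p\|_2^2=x^2+\|\vec y\|_2^2$ and $\|\vec z-\vec q\|_2^2=(x-q)^2+\|\vec y\|_2^2$ we get $\|\vec z-\vec p\|_2^2=z^2+2tx-t^2$ and $\|\vec z-\vec q\|_2^2=z^2+2(t-q)x+q^2-t^2$. Squaring the two distance inequalities (valid since $1+\beta z>0$ and $\beta(q+z)>0$) turns them into \emph{linear} inequalities in $x$, with coefficients $2t$ and $2(t-q)$. A short computation gives $0<t<q$: for $t>0$ one has $\frac12(1-\beta^2)q+\frac{3}{2q}\ge\sqrt{3(1-\beta^2)}>\beta$ by AM--GM and $\beta<\frac{\sqrt3}{2}$ (immediate from the closed form, since $\beta^2<\tfrac18+\tfrac{\sqrt3}{4}<\tfrac34$), and for $t<q$ one uses $q>\frac1\beta$ to get $(1+\beta^2)q^2+2\beta q>3$. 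Hence the inequalities read $x<U(z)$ and $x>L(z)$, where
\[
U(z)=\frac{(1+\beta z)^2-z^2+t^2}{2t},\qquad L(z)=\frac{\beta^2(q+z)^2-z^2-q^2+t^2}{2(t-q)}.
\]
So the existence of $\vec z$ forces $L(z)<U(z)$, and it suffices to reach a contradiction by proving $L(z)\ge U(z)$ for every $z\ge\frac1\beta$.

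For this, set $h(z):=(t-q)\big((1+\beta z)^2-z^2+t^2\big)-t\big(\beta^2(q+z)^2-z^2-q^2+t^2\big)$, so that, since $2t(t-q)<0$, the inequality $L(z)\ge U(z)$ is equivalent to $h(z)\ge0$. Now $h$ is a quadratic in $z$ whose leading coefficient is exactly $(1-\beta^2)q>0$, so it is an upward parabola, and it is enough to check that its vertex lies at or below $\frac1\beta$ — equivalently $h'(\tfrac1\beta)\ge0$ — and that $h(\tfrac1\beta)\ge0$. The vertex condition simplifies to $q(1-2\beta^2)+\beta^2 t(1-\beta q)\ge0$, which follows easily from $\frac1\beta<q<\frac1{1-\beta}$ and $\beta^2<\frac12$ (the first term dominates). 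The inequality $h(\tfrac1\beta)\ge0$ is the crux: substituting $t=\frac12(1-\beta^2)q-\beta+\frac{3}{2q}$ and multiplying through by $4q$ turns it into $P(q)\ge0$ on the interval $q\in\big(\frac1\beta,\frac1{1-\beta}\big)$, where $P$ is a quartic polynomial in $q$ with positive leading coefficient $(1-\beta^2)^2$.

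This quartic inequality is where I expect the real work — and the significance of the precise constant — to lie: for $\beta$ even slightly smaller, $P$ dips below zero somewhere in the interval, and the value $\beta=\frac12\sqrt{\frac12+\sqrt3-\frac12\sqrt{4\sqrt3-3}}$ is exactly the threshold at which $\min_q P(q)=0$, the minimum being attained at an interior point $q_0$ determined by $P(q_0)=P'(q_0)=0$. Concretely, I would either verify $P\ge0$ on the interval directly for this $\beta$ (a finite, if unenlightening, computation — e.g.\ by controlling the signs of $P$ and $P'$ at the two endpoints and at the unique interior critical point), or eliminate $q_0$ from the system $P(q_0)=P'(q_0)=0$ by a resultant computation and identify the stated $\beta$ as the relevant root. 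Once $h\ge0$ is established, $L(z)\ge U(z)$ for $z\ge\frac1\beta$ contradicts $L(z)<U(z)$, so no such $\vec z$ exists and the lemma follows; all the remaining pieces — the coordinate reduction, the sign analysis of $t$, and the parabola argument — are routine.
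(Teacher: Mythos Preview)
Your approach is correct and genuinely different from the paper's. The paper argues geometrically: it introduces the balls $B_p=B(\vec p,2)$ and $B_q=B(\vec q,1+\beta q)$, observes that $w$ was chosen precisely so their boundaries meet at $(w,\pm\sqrt{4-w^2})$, and then splits into two claims --- $\vec z\in B_p\cap B_q$ forces $z\le 1/\beta$ (via the inscribed-angle/chord argument), while $\vec z\notin B_p\cap B_q$ forces one of the other two alternatives (via a reduction to the vertical line $x=w$ and a $\delta$-expansion). Your route sidesteps this case analysis entirely: the parabola argument on $h(z)$ collapses everything to the boundary $z=1/\beta$.

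Two remarks. First, a slip: it is for $\beta$ slightly \emph{larger}, not smaller, that the quartic dips below zero --- smaller $\beta$ only makes all three alternatives easier. Second, and more interestingly, the ``real work'' you anticipate at $h(1/\beta)$ evaporates. Using $A:=4-1/\beta^2$ and $B:=(\beta q+1)^2-1/\beta^2-q^2$ one finds $h(1/\beta)=t(A-B)-q(A+t^2)$, and the very definition of $t$ gives $A-B=3+(1-\beta^2)q^2-2\beta q=2qt$; hence
\[
h(1/\beta)=2qt^2-q(A+t^2)=q\big(t^2-4+1/\beta^2\big).
\]
So $h(1/\beta)\ge0$ is \emph{exactly} the paper's key inequality $4-w^2\le 1/\beta^2$ (equation~(\ref{eq:ChoiseBeta})), no quartic analysis needed; your proposed resultant computation would simply rediscover the derivation in Appendix~\ref{app:ChooseBeta}. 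In short, both proofs hinge on the same scalar inequality, but your algebraic reduction trades the paper's geometric claims for a one-line parabola argument plus the vertex check $h'(1/\beta)\ge0$.
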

	
		Before proving \Cref{lem:wWinner}, we show how it implies that $\vec{w}$ is a $\beta$-plurality point.
		For any $\vec{z}\in \R^d$: 
		\begin{itemize}
			\item If $z\le\frac1\beta$, then $\beta z\le 1= R_{\vec{p}}$, and thus the open ball $B(\vec{z},\beta z)$ contains at most $\frac{n}{2}$ voters.
			\item If $\|\vec{z}-\vec{p}\|\ge1+\beta z$, then the balls $B(\vec{p},1)$ and $B(\vec{z},\beta z)$ are disjoint, and thus $B(\vec{z},\beta z)$ contains at most $\frac n2$ voters.
			\item If $\|\vec{z}-\vec{q}\|\ge \beta q+\beta z$, then the balls $B(\vec{q},\beta q)$ and $B(\vec{z},\beta z)$ are disjoint, 
			and thus $B(\vec{z},\beta z)$ contains at most $\frac n2$ voters.		
		\end{itemize} 
		We conclude that for every $\vec{z}\in\R^d$, the ball $B(\vec{z},\beta\cdot z)$ contains at most $\frac n2$ voters, and thus by \Cref{cor:ballObservation}, $\vec{w}$ is a $\beta$-plurality point.
	\end{proof}
	
	\begin{figure}[h]
		\centering
		\includegraphics[page=3,scale=0.8]{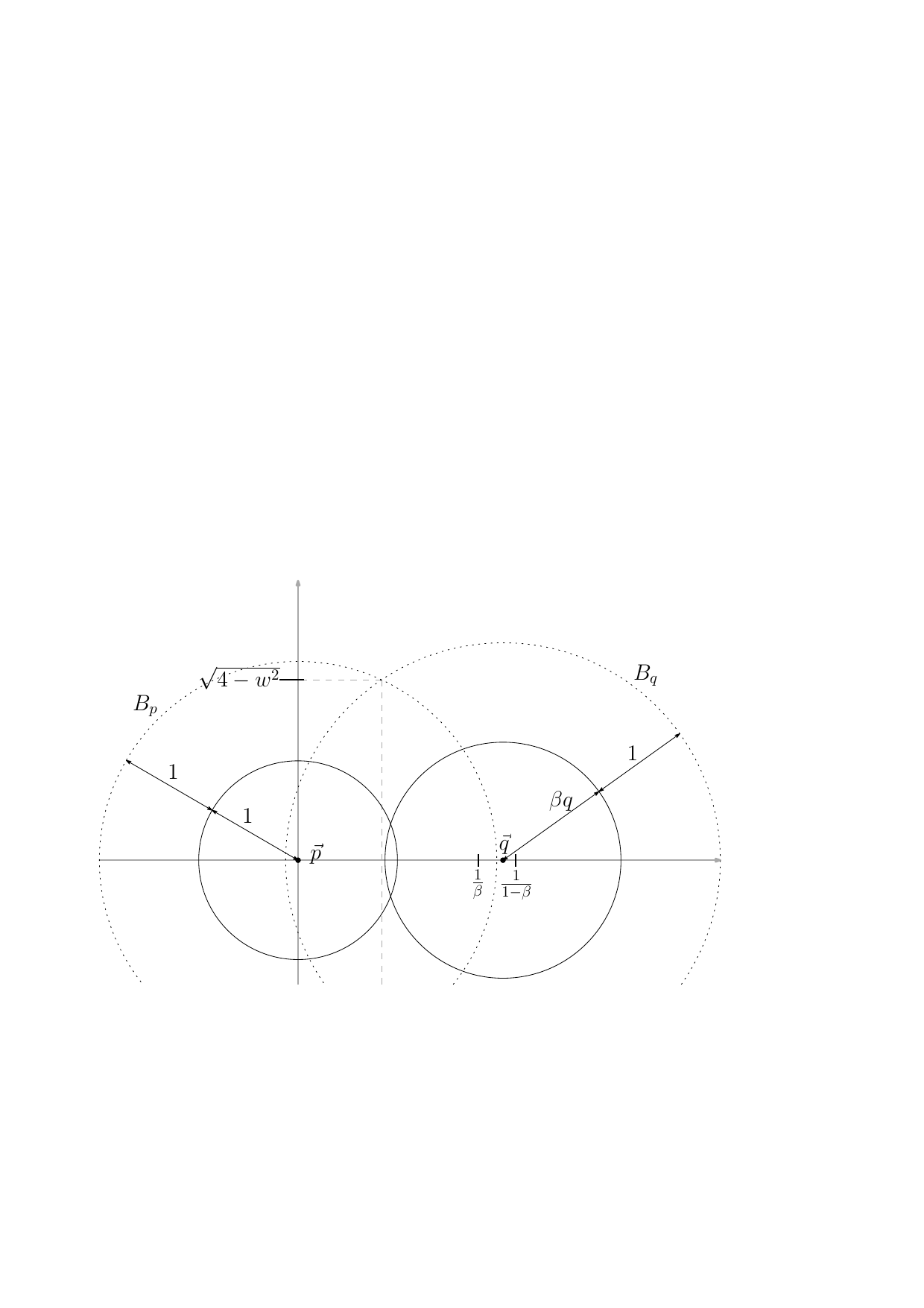}
		\caption{The points $\vec{p}=(0,0)$, $\vec{q}=(q,0)$, and $\vec{w}=(w,0)$ for $w=\frac12(1-\beta^2)q-\beta+\frac{3}{2q}$ are on the $x$-axis. $B_p$ is the circle of radius $2$ around $\vec{p}$, while $B_q$ is the circle of radius $1+\beta q$ around $\vec{q}$. The balls $B_p$ and $B_q$ intersect at $\vec{t}=(w,\sqrt{4-w^2})$ and $\vec{t'}=(w,-\sqrt{4-w^2})$.
			The ball of radius $1$ around $\vec{t}$ is tangent to both $B_p$ and $B_q$.
			It holds that $\|\vec{w}-\vec{t}\|=\sqrt{4-w^2}\le\frac1\beta$ (equation (\ref{eq:ChoiseBeta})).}
		\label{fig:euclidean}
	\end{figure}
	\begin{proof}[Proof of \Cref{lem:wWinner}]
		The points $\vec{p},\vec{q},\vec{w}$ lie on a single line. Given an additional point $\vec{z}$, the four points lie on a single plane. Thus, w.l.o.g. we can restrict the analysis to the Euclidean plane. Moreover, we can assume that $\vec{p}=(0,0)$, $\vec{q}=(q,0)$, $\vec{w}=(w,0)$ for $w=\frac12(1-\beta^2)q-\beta+\frac{3}{2q}$, and that $\vec{z}=(z_x,z_y)$ where $z_y\ge 0$ (the case of $z_y\le 0$ is symmetric).
		
		Let $B_p=B(\vec{p},2)$ and $B_q=B(\vec{q},1+\beta q)$ (see \Cref{fig:euclidean}). 
		The boundaries of  $B_p$ and $B_q$ intersect at the points $(w,\pm\sqrt{4-w^2})$ (this is the reason for our choice of $w$).
		Let $\vec{t}=(w,\sqrt{4-w^2})$, and notice that $0<w<q$ for any $q\ge \frac{1}{\beta}$ (this can be verified by straightforward calculations).
		\Cref{lem:wWinner} follows by the two following claims:
		\begin{claim}\label{clm:inBpBq}
			If $\vec{z}\in B_p\cap B_q$ then $\|\vec{z}-\vec{w}\|\le\frac1\beta$.
		\end{claim}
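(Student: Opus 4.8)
The plan is to reduce \Cref{clm:inBpBq} to the single scalar inequality $4-w^{2}\le 1/\beta^{2}$, which is the only place the specific value of $\beta$ is needed. I work throughout in the plane coordinates already fixed, so $\vec p=(0,0)$, $\vec q=(q,0)$, $\vec w=(w,0)$ with $w=\tfrac12(1-\beta^{2})q-\beta+\tfrac{3}{2q}$, $\vec z=(z_{x},z_{y})$, $B_{p}=B_{\mathbb R^{2}}(\vec p,2)$, $B_{q}=B_{\mathbb R^{2}}(\vec q,1+\beta q)$, and I recall $1/\beta<q<1/(1-\beta)$ from (\ref{eq:pq}). The first thing to record is the identity $2qw=q^{2}+4-(1+\beta q)^{2}$, i.e.\ $w=\frac{q^{2}+4-(1+\beta q)^{2}}{2q}$; this is precisely the statement that $w$ is the common $x$-coordinate of the two intersection points $\vec i,\vec i'$ of $\partial B_{p}$ and $\partial B_{q}$, and it is the only property of the chosen $w$ that the geometric step uses. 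I also note $0<w<q$ throughout the admissible range of $q$.

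For the geometric step I take $\vec z\in B_{p}\cap B_{q}$ and split on the sign of $z_{x}-w$. If $z_{x}\ge w$, I use only the $B_{p}$-constraint $z_{y}^{2}\le 4-z_{x}^{2}$, which gives $\|\vec z-\vec w\|_{2}^{2}=(z_{x}-w)^{2}+z_{y}^{2}\le -2wz_{x}+w^{2}+4$; since this is decreasing in $z_{x}$ and $z_{x}\ge w>0$, it is at most $4-w^{2}$. If $z_{x}\le w$, I use only the $B_{q}$-constraint $z_{y}^{2}\le(1+\beta q)^{2}-(z_{x}-q)^{2}$, which gives $\|\vec z-\vec w\|_{2}^{2}\le (q-w)(2z_{x}-q-w)+(1+\beta q)^{2}$; this is increasing in $z_{x}$ because $q-w>0$, so it is at most $(1+\beta q)^{2}-(q-w)^{2}$, which equals $4-w^{2}$ by the identity for $w$. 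Hence $\|\vec z-\vec w\|_{2}^{2}\le 4-w^{2}$ in both cases. (Geometrically, $\{z_{x}\ge w\}$ is the side of the lens $B_{p}\cap B_{q}$ whose boundary lies on $\partial B_{p}$ and $\{z_{x}\le w\}$ the side bounded by $\partial B_{q}$; since $\|\cdot-\vec w\|_{2}^{2}$ is convex its maximum over the convex lens is attained at $\vec i$, where the two bounds meet, with value $4-w^{2}$.)

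It remains to show $4-w^{2}\le 1/\beta^{2}$, equivalently $w^{2}\ge 4-1/\beta^{2}$. Viewing $w=w(q)$ as a function of $q$ on $(1/\beta,1/(1-\beta))$, it is convex (second derivative $3/q^{3}>0$) and positive there, so its square is minimized at the unique critical point $q^{\star}=\sqrt{3/(1-\beta^{2})}$ of $w$; a short computation with the explicit value of $\beta$ confirms $q^{\star}$ lies inside the interval. Evaluating, $w(q^{\star})=\sqrt{3(1-\beta^{2})}-\beta$, so the desired inequality reduces to $\bigl(\sqrt{3(1-\beta^{2})}-\beta\bigr)^{2}\ge 4-\tfrac1{\beta^{2}}$. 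This is exactly the equation that defines $\beta$ (equation~(\ref{eq:ChoiseBeta})): it holds with equality, and rearranging it yields the quartic $4\beta^{4}-(1+2\sqrt 3)\beta^{2}+1=0$, from whose smaller root in $\beta^{2}$ the stated closed form for $\beta$ follows.

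The step I expect to be the main obstacle is getting the case analysis exactly right: the $B_{p}$-constraint must be applied on the side $z_{x}\ge w$ and the $B_{q}$-constraint on $z_{x}\le w$ (using either one alone on the whole lens is far too lossy, since the bound degrades badly toward the opposite cap), and one must carry the identity $2qw=q^{2}+4-(1+\beta q)^{2}$ so that the $B_{q}$-side bound collapses cleanly to $4-w^{2}$ rather than to a messier expression. Everything after that — the one-variable minimization of $w(q)$, checking that $q^{\star}$ is admissible, and recognizing the final inequality as the defining relation for $\beta$ — is routine calculus and algebra.
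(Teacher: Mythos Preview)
Your proof is correct and follows the same architecture as the paper's: split on $z_x\gtrless w$, show $\|\vec z-\vec w\|_2^2\le 4-w^2$ in each half, and then verify $4-w^2\le 1/\beta^2$ by the one-variable optimization in $q$ (this last part is exactly equation~(\ref{eq:ChoiseBeta}) and the paper's Appendix~D calculation, with the same critical point $q^\star=\sqrt{3/(1-\beta^2)}$).

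The only difference is in how the bound $\|\vec z-\vec w\|_2^2\le 4-w^2$ is extracted on each half of the lens. The paper notes that the chord $[\vec i,\vec i']$ separates $\vec z$ from the relevant center ($\vec p$ when $z_x\ge w$, $\vec q$ when $z_x<w$), so the inscribed angle $\angle\vec i\,\vec z\,\vec i'$ exceeds $\pi/2$, which by Thales' theorem forces $\vec z$ into the disk $B_w$ with diameter $[\vec i,\vec i']$. You instead substitute the appropriate circle constraint for $z_y^2$ and observe that the resulting bound is affine in $z_x$, hence maximized at $z_x=w$ where it collapses (via the identity $2qw=q^2+4-(1+\beta q)^2$) to $4-w^2$. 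The two arguments are really the same fact in different clothing; yours is slightly more elementary, the paper's slightly more geometric.
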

		\begin{claim}\label{clm:Bp}
			If $\vec{z}\notin B_p\cap B_q$ then either $\|\vec{z}-\vec{p}\|\ge1+\beta z$
			or $\|\vec{z}-\vec{q}\|\ge \beta q+\beta z$.
		\end{claim}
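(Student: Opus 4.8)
I would run the argument in the planar reduction already set up in the proof of \Cref{lem:wWinner}: restrict to the plane spanned by $\vec{p},\vec{q},\vec{w},\vec{z}$, put $\vec{p}=(0,0)$, $\vec{q}=(q,0)$, $\vec{w}=(w,0)$, recall that $q\in(\tfrac1\beta,\tfrac1{1-\beta})$ by \eqref{eq:pq}, that $0<w<q$, and that the two points of $\partial B_p\cap\partial B_q$ have first coordinate $w$. Write $a:=\|\vec{z}-\vec{p}\|_2$ and $z:=\|\vec{z}-\vec{w}\|_2$ (the sign of the second coordinate of $\vec{z}$ is irrelevant, since $\vec{p},\vec{q},\vec{w}$ all lie on the $x$-axis). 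First dispose of the trivial case $z\le\tfrac1\beta$: if $\vec{z}\notin B_p$ then $a>2\ge 1+\beta z$, giving item~2; if $\vec{z}\notin B_q$ then $\|\vec{z}-\vec{q}\|_2>1+\beta q\ge\beta q+\beta z$, giving item~3. So assume henceforth $z>\tfrac1\beta$. The algebraic engine is Stewart's relation for the point $\vec{w}$ on the segment $\vec{p}\vec{q}$, which gives, for \emph{every} $\vec{z}$,
\[
\|\vec{z}-\vec{q}\|_2^{2}\;=\;\frac{q}{w}\,z^{2}\;+\;q(q-w)\;-\;\frac{q-w}{w}\,a^{2},
\]
so that the claim becomes an inequality in the two scalars $a$ and $z$ only.

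The case split is: Case A, $\vec{z}\notin B_p$, i.e.\ $a>2$; and Case B, $\vec{z}\in B_p$, i.e.\ $a\le 2$ --- in which case, since $z>\tfrac1\beta$, we have $1+\beta z>2\ge a$, so item~2 fails and $\vec{z}\notin B_q$. In both cases it therefore suffices to prove item~3, $\|\vec{z}-\vec{q}\|_2\ge\beta(q+z)$. Put $F(a,z):=\|\vec{z}-\vec{q}\|_2^{2}-\beta^{2}(q+z)^{2}$; by the displayed identity this is an explicit polynomial which is affine and \emph{strictly decreasing} in $a^{2}$, and which, viewed as a function of $z$, is a convex parabola whose vertex lies (by a short computation using $w<1<\tfrac1\beta<q$) strictly below $\tfrac1\beta$, so $F$ is increasing in $z$ on $z\ge\tfrac1\beta$. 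In Case B these two monotonicities push the infimum of $F$ first to $a=2$ and then to $z=\tfrac1\beta$. In Case A, the failure of item~2 gives $a<1+\beta z$, so for fixed $a$ the worst (smallest feasible) $z$ is $z=(a-1)/\beta$; substituting $a=1+\beta z$ into $F$ gives once more a convex parabola in $z$ whose vertex lies below $\tfrac1\beta$, so the infimum is again approached at $z=\tfrac1\beta$, $a=2$. Hence in both cases $F(a,z)\ge F(2,\tfrac1\beta)$ on the relevant region, and the claim reduces to the single numeric inequality $F(2,\tfrac1\beta)\ge 0$ (the point $a=2,z=\tfrac1\beta$ is geometrically realizable, since $|2-w|\le\tfrac1\beta\le 2+w$ throughout the range of $q$).

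Plugging $a=2$, $z=\tfrac1\beta$ and $w=w(q)=\tfrac12(1-\beta^{2})q-\beta+\tfrac{3}{2q}$ into $F(2,\tfrac1\beta)\ge 0$ and clearing the positive factor $2q$ turns everything into a single polynomial inequality in $q$ (of degree at most $4$) that must hold for $q\in[\tfrac1\beta,\tfrac1{1-\beta}]$. This is precisely where the exact value of $\beta$ from \eqref{eq:ChoiseBeta} is used: the inequality is \emph{tight}, attaining equality at $q=\sqrt{3/(1-\beta^{2})}$ --- which does lie in $(\tfrac1\beta,\tfrac1{1-\beta})$ --- and demanding this forces the defining relation for $\beta$, equivalently $4-w\big(\sqrt{3/(1-\beta^{2})}\,\big)^{2}=\tfrac1{\beta^{2}}$, the same relation that \eqref{eq:ChoiseBeta} encodes in \Cref{clm:inBpBq}. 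I expect this last step to be the main obstacle: the polynomial inequality is delicate (it is also essentially tight at the two endpoints $q=\tfrac1\beta$ and $q=\tfrac1{1-\beta}$), so one has to argue carefully --- e.g.\ by dividing out the double root at $q=\sqrt{3/(1-\beta^{2})}$ and controlling the sign of the remaining quadratic factor --- that it is nonnegative on the whole closed interval; keeping every monotonicity/convexity direction in the reduction correct, and verifying feasibility of $\vec{z}$ at the extreme point, are the secondary places where care is needed.
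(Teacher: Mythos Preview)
Your approach is correct and genuinely different from the paper's. The paper splits according to the first coordinate of $\vec z$: when $z_x\ge w$ it shows $\vec z\notin B_p$ and proves item~2, when $z_x\le w$ it shows $\vec z\notin B_q$ and proves item~3. In each half it rotates $\vec z$ about the relevant center onto the line $x=w$ (which only strengthens the inequality), writes $\vec z=(w,\sqrt{4-w^2}+\delta)$, and checks the inequality by a direct expansion in~$\delta$, invoking at the end two separate numeric facts, $\sqrt{4-w^2}(1-\beta^2)\ge\beta$ and $\sqrt{4-w^2}(1-\beta^2)\ge\beta^2 q$, for $q$ in the range \eqref{eq:pq}.

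Your route via Stewart's relation is more algebraic: it collapses the two–dimensional point $\vec z$ to the scalar pair $(a,z)$, treats item~3 uniformly in both cases, and uses monotonicity in $a$ and convexity in $z$ to push everything to the single corner $(a,z)=(2,\tfrac1\beta)$. This is a cleaner reduction than the paper's two parallel computations. One thing you seem not to have noticed is that the final inequality $F(2,\tfrac1\beta)\ge 0$ simplifies dramatically: using your Stewart identity and the definition of $w$ one gets
\[
F\!\left(2,\tfrac1\beta\right)=\frac{q}{w}\left(\frac{1}{\beta^{2}}-(4-w^{2})\right),
\]
so the ``delicate polynomial inequality'' you anticipate is literally \eqref{eq:ChoiseBeta}, already established in \Cref{app:ChooseBeta}. (In particular, $q^\ast=\sqrt{3/(1-\beta^2)}$ is a double root, since $4-w^2$ attains its maximum there, which is why dividing out $(q-q^\ast)^2$ works; the near-tightness you observe at the endpoints $q=\tfrac1\beta$ and $q=\tfrac1{1-\beta}$ is real but strict.) Your monotonicity checks --- the vertex of $F(a,\cdot)$ and of $F(1+\beta z,z)$ lying below $\tfrac1\beta$ --- do go through once one uses $w<1$ on the whole range of $q$, which follows from the explicit formula for $w$.
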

	\end{proof}
	
	\vspace{5pt}
	\emph{Proof of \Cref{clm:inBpBq}.}~
	The boundaries of  $B_p$ and $B_q$ intersect at the points $\vec{t}=(w,\sqrt{4-w^2})$ and $\vec{t'}=(w,-\sqrt{4-w^2})$.
	We claim that for every $q\in(\frac1\beta,\frac{1}{1-\beta})$, it holds that
	\begin{equation}
	\|\vec{t}-\vec{w}\|=\sqrt{4-w^{2}}\le\frac1\beta~.\label{eq:ChoiseBeta}
	\end{equation}
	In fact, $\beta$ was chosen to be the maximum number satisfying equation (\ref{eq:ChoiseBeta}). A calculation showing that equation (\ref{eq:ChoiseBeta}) holds is deferred to \Cref{app:ChooseBeta}.
	Consider the ball $B_w=B(\vec{w},\|\vec{t}-\vec{w}\|)$. $B_w$ has radius at most $\frac1\beta$, and the segment $[\vec{t},\vec{t'}]$ is a diameter of $B_w$. 
	Furthermore, $[\vec{t},\vec{t'}]$ is a chord in both $B_p$ and $B_q$.
	
	\begin{wrapfigure}{r}{0.27\textwidth}
		\begin{center}
			\vspace{-25pt}
			\includegraphics[scale=0.9]{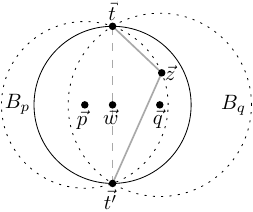}
			\vspace{-5pt}
		\end{center}
		\vspace{-10pt}
	\end{wrapfigure}
	Recall that we assume that $\vec{z}=(z_x,z_y)\in B_p\cap B_q$. If $z_x\ge w$, then the chord $[\vec{t},\vec{t'}]$ of $B_p$ separates the point $\vec{z}$ from the center $\vec{p}$, because $0<w<q$ (see illustration on the right).
	It follows that the angle $\angle \vec{t}\vec{z}\vec{t'}$ is larger than $\frac\pi2$, which implies that $\vec{z}\in B_w$ (as $[\vec{t},\vec{t'}]$ is a diameter, for any point $\vec{z}\notin B_w$, the angle $\angle \vec{t}\vec{z}\vec{t'}$ is smaller than $\frac\pi2$).
	If the $z_x<w$, a symmetric argument (using $B_q$) will imply that $\vec{z}\in B_w$. 
	We conclude that in any case $\vec{z}\in B_w$. By equation (\ref{eq:ChoiseBeta}), it follows that  $\|\vec{z}-\vec{w}\|\le\frac1\beta$.\qed
	
	\begin{wrapfigure}{r}{0.27\textwidth}
		\begin{center}
			\vspace{-20pt}
			\includegraphics[scale=0.65]{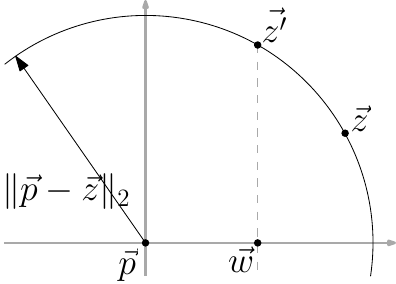}
			\vspace{-5pt}
		\end{center}
		\vspace{-10pt}
	\end{wrapfigure}
	\vspace{8pt}\emph{Proof of \Cref{clm:Bp}.}~
	Assume that $\vec{z}=(z_x,z_y)\notin B_p\cap B_q$.
	Recall that  $z=\|\vec{z}-\vec{w}\|$.
	We show that if $z_x\ge w$ then $\|\vec{z}-\vec{p}\|\ge1+\beta z$, and otherwise $\|\vec{z}-\vec{q}\|\ge\beta q+\beta z$.\\
	First, consider the case when $z_x\ge w$.
	Notice that $\vec{z}\notin B_p$, because the boundaries of $B_p$ and $B_q$ intersect only at $\vec{t},\vec{t'}$, and thus the intersection of $B_p$ with the half plane $x\ge w$ is contained in $B_q$.
	Let $\vec{z'}=(z'_x,z'_y)$ be a point on the ball with radius $\|\vec{z}-\vec{p}\|$ around $\vec{p}$ such that $z'_x=w$ and $z'_y\ge 0$, and notice that $z'_y\ge z_y$ (see illustration on the right).
	Notice that $\|\vec{z'}-\vec{w}\|\ge \|\vec{z}-\vec{w}\|$, because 
	$z_x^2+z_y^2=\|\vec{z}-\vec{p}\|^2=\|\vec{z'}-\vec{p}\|^2=
	w^2+{z'_y}^2$ and $z_x\ge w$, so we get $$\|\vec{z}-\vec{w}\|^2=z_y^2+(z_x-w)^2=z_y^2+z_x^2-2wz_x+w^2=2w^2-2wz_x+{z'_y}^2\le {z'_y}^2=\|\vec{z'}-\vec{w}\|^2~.$$
	
	Since $\|\vec{z}-\vec{p}\|=\|\vec{z'}-\vec{p}\|$, 
	it is enough to show that $\|\vec{z'}-\vec{p}\|\ge1+\beta \|\vec{z'}-\vec{w}\|$. From here on, we will abuse notation and refer to $z'$ as $z$. Thus we simply assume $\vec{z}=(w,z)$.
	
	As $B_p$ and $B_q$ intersect at $\vec{t}$, and $\vec{z}\notin B_p\cap B_q$, it must hold that $z\ge \sqrt{4-w^2}$. Note that $\|\vec{p}-\vec{t}\|=2$ (because $\vec{t}$ is on the boundary of $B_p$), and by equation (\ref{eq:ChoiseBeta}), $\beta\cdot\|\vec{t}-\vec{w}\|\le 1$.
	It thus follows that $1+\beta\|\vec{w}-\vec{t}\|\le2=\|\vec{p}-\vec{t}\|$, implying that the claim holds for $\vec{z}=\vec{t}$.
	It remains to prove that the claim holds for $\vec{z}=(w,\sqrt{4-w^{2}}+\delta)$ for all $\delta\ge 0$.
	It holds that 
	\begin{align*}
	\|\vec{z}-\vec{p}\|^{2} & =w^{2}+(\sqrt{4-w^{2}}+\delta)^{2}=\|\vec{t}-\vec{p}\|^{2}+2\delta\sqrt{4-w^{2}}+\delta^{2}~.\\
	\left(1+\beta\cdot\|\vec{z}-\vec{w}\|\right)^{2} &= \left(1+\beta\cdot\|\vec{t}-\vec{w}\|+\beta\cdot\|\vec{z}-\vec{t}\|\right)^{2} \\ & =\left(1+\beta\cdot\|\vec{t}-\vec{w}\|\right)^{2}+2\beta\|\vec{z}-\vec{t}\|\left(1+\beta\cdot\|\vec{t}-\vec{w}\|\right)+\beta^{2}\|\vec{z}-\vec{t}\|^{2}\\
	& =\left(1+\beta\cdot\|\vec{t}-\vec{w}\|\right)^{2}+2\beta\delta\left(1+\beta\sqrt{4-w^{2}}\right)+\beta^{2}\delta^{2}~.
	\end{align*}
	As $1+\beta\|\vec{w}-\vec{t}\|\le\|\vec{p}-\vec{t}\|$,
	it holds that 
	\begin{align*}
	\|\vec{z}-\vec{p}\|^{2}-\left(1+\beta\cdot\|\vec{z}-\vec{w}\|\right)^{2} &
	\ge\left(2\delta\sqrt{4-w^{2}}+\delta^{2}\right)-\left(2\beta\delta\left(1+\beta\sqrt{4-w^{2}}\right)+\beta^{2}\delta^{2}\right)\\
	& =2\delta\sqrt{4-w^{2}}\left(1-\beta^{2}\right)+\delta^{2}(1-\beta^{2})-2\beta\delta\ge0~,
	\end{align*}
	where the last inequality holds\footnote{See calculation \href{https://www.wolframalpha.com/input/?i=\%284-\%28\%281\%2F2\%29\%281-b\%5E2\%29q-b\%2B3\%2F\%282q\%29\%29\%5E2\%29\%5E\%281\%2F2\%29\%281-b\%5E2\%29\%3E\%3Db\%3B+b\%3D0.55701571}{here}.} as by our choice of $\beta$, we have $\sqrt{4-w^{2}}\left(1-\beta^{2}\right)\ge\beta$ for every $\frac1\beta<q<\frac{1}{1-\beta}$.
	The claim follows. 
	
	Next, we show that in the symmetric case, when $z_x\le w$, it holds that $\|\vec{z}-\vec{q}\|\ge\beta q+\beta z$.
	Similarly to the previous case, we can assume that $\vec{z}=(w,z)$, where $z\ge\sqrt{4-w^2}$ (as this is only harder).
	Now, as $\vec{t}$ lies on the boundary of $B_q$, by equation (\ref{eq:ChoiseBeta}), it holds that $\|\vec{t}-\vec{q}\|=1+\beta q\ge \beta \|\vec{w}-\vec{t}\|+\beta q$.
	It remains to prove that the claim holds for $\vec{z}=(w,\sqrt{4-w^{2}}+\delta)$ for some $\delta>0$.
	It holds that 	
	\begin{align*}
	\|\vec{z}-\vec{q}\|^{2} & =\left(q-w\right)^{2}+(\sqrt{4-w^{2}}+\delta)^{2}=\|\vec{t}-\vec{q}\|^{2}+2\delta\sqrt{4-w^{2}}+\delta^{2}~.\\
	\left(\beta q+\beta\cdot\|\vec{z}-\vec{w}\|\right)^{2} & =
	\left(\beta q+\beta\cdot\|\vec{t}-\vec{w}\|+\beta\cdot\|\vec{z}-\vec{t}\|\right)^{2} \\
	&=\left(\beta q+\beta\cdot\|\vec{t}-\vec{w}\|\right)^{2}+2\beta\|\vec{z}-\vec{t}\|\left(\beta q+\beta\cdot\|\vec{t}-\vec{w}\|\right)+\beta^{2}\|\vec{z}-\vec{t}\|^{2}\\
	& \ge\left(\beta q+\beta\|\vec{t}-\vec{w}\|\right)^{2}+2\beta\delta\left(\beta q+\beta\sqrt{4-w^{2}}\right)+\beta^{2}\delta^{2}~.
	\end{align*}
	Thus, 
	\begin{align*}
	\|\vec{z}-\vec{q}\|^{2}-\left(\beta q+\beta\cdot\|\vec{z}-\vec{w}\|\right)^{2} & \ge\left(2\delta\sqrt{4-w^{2}}+\delta^{2}\right)-\left(2\beta\delta\left(\beta q+\beta\sqrt{4-w^{2}}\right)+\beta^{2}\delta^{2}\right)\\
	& =2\delta\sqrt{4-w^{2}}\left(1-\beta^{2}\right)+\delta^{2}(1-\beta^{2})-2\beta^{2}q\delta\ge0~,
	\end{align*}
	where the last inequality holds\footnote{See calculation \href{https://www.wolframalpha.com/input/?i=\%284-\%28\%281\%2F2\%29\%281-b\%5E2\%29q-b\%2B3\%2F\%282q\%29\%29\%5E2\%29\%5E\%281\%2F2\%29\%281-b\%5E2\%29\%3E\%3Db\%5E2q\%3B+b\%3D0.55701571}{here}.} as by our choice of $\beta$, we have $\sqrt{4-w^{2}}\left(1-\beta^{2}\right)\ge\beta^2q$ for every $\frac1\beta<q<\frac{1}{1-\beta}$.
	The claim follows.\qed
\newline

	\begin{wrapfigure}{r}{0.4\textwidth}
		\begin{center}
			\vspace{-10pt}
			\includegraphics[scale=0.55]{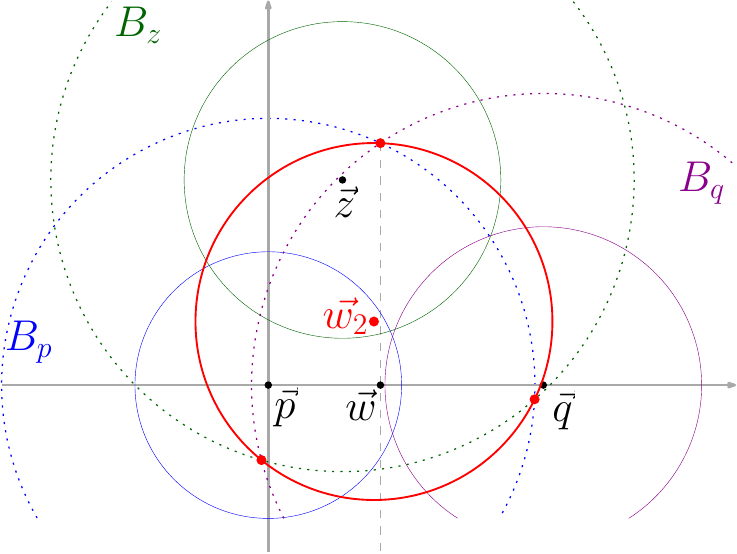}
			\vspace{-5pt}
		\end{center}
		\vspace{-20pt}
	\end{wrapfigure}
\textbf{Remark 1.}
Our proof of \Cref{thm:Euclidean} is based on a simple algorithm: choose a point $\vec{p}\in\mathbb{R}^d$ which minimizes $R_p$. If it is a $\beta$-plurality point - we are done. Otherwise, there is a ball centered at a point $q$ such that the ball of radius $\beta\cdot \|\vec{p}-\vec{q}\|$ contains more than $\frac n2$ voters. We then argue that a choice $\vec{w}$, which is a linear combination of $\vec{p}$ and $\vec{q}$ is a $\beta$-plurality winner.
This algorithm can be naturally extended for another step. Fix some $\beta'>\beta$, and suppose that $\vec{w}$ is not a $\beta'$-plurality point.
In particular, there is a point $\vec{z}$ such that the ball $B(\vec{z},\beta'\|\vec{z}-\vec{w}\|)$ contains more than $\frac n2$ voter points.
We then can hope to find a new choice point $\vec{w_2}$ that will be a $\beta'$-plurality point. Here a natural choice of $\vec{w_2}$ will be the center of the minimal ball containing the intersection of the three balls $B_p=B(\vec{p},2)$, $B_q=B(\vec{q},\beta'\|\vec{q}-\vec{p}\|+1)$, and $B_z=B(\vec{z},\beta'\|\vec{z}-\vec{w}\|+1)$. See illustration on the right.
Even though it is indeed possible that this approach will provide some improvement, it is unlikely to be very significant. The reason is that even for the simplest symmetric case where $\vec{q}=(\frac{1}{\beta'},0)$, $\vec{z}=(\frac{1}{2\beta'},\frac{1}{\beta'})$, one need $\beta'\le \sqrt{\frac{89}{256}}\approx 0.59$. For the hardest case, it is likely that a much smaller $\beta'$ will be required.

	\section{Conclusion}
	Let ${\beta^*=\inf\left\{\beta^*_{(X,d)}~\mid~(X,d)~\mbox{is a metric space}\right\}}$. 
	In this paper we showed that $\sqrt{2}-1\le \beta^*\le \frac12$. Further, in the Euclidean case, for arbitrary dimension $d\ge 4$, by combining our results with \cite{ABGH21}, we know that ${0.557<\beta^*_{(\R^d,\|\cdot\|_2)}\le\frac{\sqrt{3}}{2}}$.
	The main question left open is closing these two gaps.
	Our conjecture is that the upper bounds are tight, since when $|V|=3$, a plurality point must ``win'' $\frac23$ of the overall vote. This task can only become easier once the number of voters increase.
	\begin{conjecture}
		$\beta^*=\frac12$, and $\beta^*_{(\R^d,\|\cdot\|_2)}=\frac{\sqrt{3}}{2}$ for every $d\ge 2$.
\end{conjecture}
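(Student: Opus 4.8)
The two upper bounds are already in hand: \Cref{thm:LB} gives $\beta^*\le\frac12$, and the equilateral-triangle example of Aronov \etal gives $\beta^*_{(\R^d,\|\cdot\|_2)}\le\frac{\sqrt3}{2}$ for every $d$ (embed the triangle into any $\R^d$). Thus the entire content of the conjecture is to raise the two lower bounds from $\sqrt2-1$ (\Cref{thm:main}) and $\approx0.557$ (\Cref{thm:Euclidean}) up to $\frac12$ and $\frac{\sqrt3}{2}$ respectively. The plan is to attack the Euclidean statement first through a convexity reformulation, and then to seek a metric analogue for the general statement.

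For the Euclidean part I would recast the problem as a single intersection-nonemptiness question. Fix $\beta$ and a multiset $V$ with $n=|V|$. By \Cref{cor:ballObservation}, a point $\vec p$ is a $\beta$-plurality point iff for every $\vec q$ the open ball $B_{\R^d}(\vec q,\beta\|\vec q-\vec p\|_2)$ holds at most $\frac n2$ voters; equivalently, $\vec p$ fails iff some majority $S\subseteq V$ with $|S|>\frac n2$ is entirely captured by some $\vec q$, i.e.\ $r_S(\vec q):=\max_{\vec v\in S}\|\vec v-\vec q\|_2<\beta\|\vec q-\vec p\|_2$. Reading this as a constraint on $\vec p$, the point is \emph{safe against $S$} precisely when $\vec p\in P_S:=\bigcap_{\vec q\in\R^d}B_{\R^d}\!\left(\vec q,\tfrac1\beta\,r_S(\vec q)\right)$. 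Each $P_S$ is an intersection of balls, hence compact and \emph{convex}, and $\vec p$ is a $\beta$-plurality point iff $\vec p\in\bigcap_S P_S$, the intersection over the finitely many minimal majorities. By Helly's theorem in $\R^d$ this intersection is nonempty as soon as every $d+1$ of the sets $P_S$ meet. The first (easy) step is $P_S\ne\emptyset$ for a single $S$: the minimum-enclosing-ball centre $\vec c_S$ works, since the MEB property yields $r_S(\vec q)\ge\|\vec q-\vec c_S\|_2\ge\beta\|\vec q-\vec c_S\|_2$ for all $\vec q$. The crux is to prove the $(d+1)$-wise intersection is nonempty for every $\beta<\frac{\sqrt3}{2}$.

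This reformulation already explains the conjectured constant, and pinpoints the obstacle. In the plane Helly requires only triples, and for $n=3$ the three minimal majorities are the three pairs of voters; for the equilateral triangle the three regions $P_{\{\vec v_i,\vec v_j\}}$ have the centroid as essentially their only common point at $\beta=\frac{\sqrt3}{2}$ and acquire empty intersection for $\beta>\frac{\sqrt3}{2}$ — so the extremal Helly-triple is literally the tight example, and the conjecture's ``three voters'' intuition is exactly Helly in the plane. In higher dimensions a regular simplex on more than two voters becomes attackable only for $\beta$ strictly above $\frac{\sqrt3}{2}$ (e.g.\ the regular tetrahedron needs $\beta\ge\frac{2\sqrt2}{3}\approx0.94$), suggesting that the binding configuration stays the two-dimensional equilateral triangle. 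Hence the main obstacle is a dimension-reduction lemma: showing that the worst $(d+1)$-tuple of majorities of an arbitrary point set reduces to a planar (equilateral) extremizer, so that the critical $\beta$ is the dimension-free constant $\frac{\sqrt3}{2}$. I expect controlling arbitrary $(d+1)$-tuples and symmetrizing them to the planar case — most plausibly via a Lagrangian/symmetrization argument on the worst configuration, or a direct analysis of when $\bigcap_i P_{S_i}=\emptyset$ through the supporting balls — to be by far the hardest part.

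For the general-metric statement ($\beta^*\ge\frac12$) the same bookkeeping shows $\vec p$ is safe against $S$ iff $d(p,q)\le\frac1\beta r_S(q)$ for all $q$, but the sets $P_S$ are no longer convex and Helly is unavailable; this is exactly why \Cref{thm:main} stalls at $\sqrt2-1$, its single candidate $p^*$ controlling attackers only through $d(p^*,v)\le d(p^*,q)+d(q,v)$, which is too lossy to reach $\frac12$. I would therefore upgrade the one-shot argument to a two-stage one mirroring the Euclidean $\vec w$ of \Cref{thm:Euclidean}: when $p^*$ is beaten by a witness $q$, every captured voter lies within $d(p^*,q)$ of $q$ and within $2\,d(p^*,q)$ of $p^*$, and one chooses a new candidate ``between'' $p^*$ and $q$. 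The difficulty is that a general metric need not be geodesic, so such an intermediate point may not exist; a natural fix is to first prove the bound for geodesic (length) spaces — which already contain the tight $3$-cycle example — and then reduce arbitrary finite metrics to that case, or alternatively to develop a purely combinatorial Helly-type theorem for the metric sets $P_S$. I expect this failure of convexity to be the principal obstacle on the general-metric side, and it is plausible that the honest statement needs the restriction to geodesic spaces in which the extremal example already lives.
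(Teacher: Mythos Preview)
The statement is a \emph{conjecture}: the paper explicitly leaves it open (``The main question left open is closing these two gaps'') and provides no proof, so there is nothing in the paper to compare against. Your submission is, appropriately, a research outline rather than a proof, and you flag its gaps yourself.

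On the Euclidean side, the Helly reformulation is sound: the sets $P_S=\bigcap_{\vec q}B\bigl(\vec q,\tfrac1\beta\, r_S(\vec q)\bigr)$ are convex, and your argument that the minimum-enclosing-ball centre $c_S$ lies in $P_S$ (via $c_S\in\operatorname{conv}(S)\Rightarrow\|\vec q-c_S\|_2\le r_S(\vec q)$) is correct. But the step you yourself label ``by far the hardest part'' --- showing every $(d{+}1)$-tuple of the $P_S$ intersects for all $\beta<\tfrac{\sqrt3}{2}$, via a dimension-reduction to the planar equilateral case --- is the entire content of the conjecture, and you offer no mechanism beyond the hope of a symmetrization or Lagrangian argument. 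That regular simplices in higher dimension require larger $\beta$ does not constrain an \emph{arbitrary} $(d{+}1)$-tuple of majorities drawn from an arbitrary point set; there is no a priori reason the extremal tuple looks like a simplex at all. On the general-metric side, your two-stage refinement idea has no carrier in a non-geodesic space: the paper's own Euclidean bootstrap in \Cref{thm:Euclidean} relies essentially on the explicit planar geometry of \Cref{lem:wWinner} and \Cref{clm:ballObservation}, and there is no analogue of ``a point between $p^*$ and $q$'' in an arbitrary metric. Restricting to geodesic spaces, as you propose, weakens the conjecture rather than proving it as stated.
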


If indeed $\beta^*_{(\R^d,\|\cdot\|_2)}=\frac{\sqrt{3}}{2}\approx 0.866$ for every dimension $d$, then it implies that the concept of $\beta$-plurality might be very useful as a relaxation for Condorcet winner. Informally, it shows that the amount of ``compromise'' that we need to make in order to find a plurality point in any Euclidean space is relatively small.

\paragraph{Acknowledgments.}
After sharing our proof of \Cref{thm:main} with the authors of \cite{ABGH21}, Mark de Berg proved a weaker version of \Cref{thm:Euclidean}, and  generously allowed us to publish our proof which is based on his observation. Specifically, de Berg proved that $\beta^*_{(\R^d,\|\cdot\|_2)}\ge\frac12$.\footnote{Following the lines of the proof of \Cref{thm:Euclidean}, for $\beta=\frac12$, the point $\vec{p}$ is a $\frac12$-plurality point, as
no $q$ satisfies equation (\ref{eq:pq}).
} \\
The authors would also like to thank Boris Aronov, Nimrod Talmon, and an anonymous reviewer, for providing useful comments on the manuscript.

{\small
\bibliographystyle{alphaurlinit}
\bibliography{refs}
}
\appendix
\section{Equivalence between the definitions of $\beta$-plurality point}\label{sec:defEquiv}
\begin{lemma}
\Cref{def:Plurpoint} for $\beta(p,V)$ is equivalent to the definition from Aronov \etal \cite{ABGH21}.
\end{lemma}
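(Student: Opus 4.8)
The plan is to observe first that the two definitions are \emph{not} equivalent for a fixed value of $\beta$: \Cref{def:Plurpoint} is strictly weaker, since a voter $v$ with $\beta\cdot d(p,v)=d(q,v)$ is counted in favor of $p$, whereas in the definition of Aronov \etal such a voter is undecided and counted for neither candidate. What I would show is that the two definitions nonetheless assign the same value to $\beta(p,V)$. Write $S_{\mathrm A}$ for the set of $\beta$ for which $p$ is a $\beta$-plurality point in the sense of \cite{ABGH20}, and $S_{\mathrm D}$ for the corresponding set with respect to \Cref{def:Plurpoint}, so that the two candidate values are $\sup S_{\mathrm A}$ and $\sup S_{\mathrm D}$. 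It suffices to prove the inclusions $S_{\mathrm A}\subseteq S_{\mathrm D}$ and $(0,\sup S_{\mathrm D})\subseteq S_{\mathrm A}$, which together force $\sup S_{\mathrm A}=\sup S_{\mathrm D}$.

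For the first inclusion, fix $\beta\in S_{\mathrm A}$ and an arbitrary $q\in X$, and partition $V$ into $A_q=\{v\in V\mid \beta\cdot d(p,v)<d(q,v)\}$, $E_q=\{v\in V\mid \beta\cdot d(p,v)=d(q,v)\}$, and $B_q=\{v\in V\mid \beta\cdot d(p,v)>d(q,v)\}$. The hypothesis gives $|A_q|\ge|B_q|$; adding $|E_q|$ to the left-hand side and using $|A_q|+|E_q|+|B_q|=|V|$ yields $|A_q|+|E_q|\ge |V|/2$, which is exactly the requirement of \Cref{def:Plurpoint} for this $q$. As $q$ was arbitrary, $\beta\in S_{\mathrm D}$.

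For the second inclusion, fix $\beta\in S_{\mathrm D}$ and $\beta'$ with $0<\beta'<\beta$; I claim $\beta'\in S_{\mathrm A}$. Fix $q\in X$; the case $q=p$ is trivial since $\beta'\le 1$ makes $\{v\mid \beta'\cdot d(p,v)>d(p,v)\}$ empty, so assume $q\ne p$. The key point is the inclusion $\{v\mid \beta\cdot d(p,v)\le d(q,v)\}\subseteq\{v\mid \beta'\cdot d(p,v)< d(q,v)\}$: if $d(p,v)>0$ then $\beta'\cdot d(p,v)<\beta\cdot d(p,v)\le d(q,v)$, and if $d(p,v)=0$ then $v=p$, so $d(q,v)=d(q,p)>0=\beta'\cdot d(p,v)$. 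Consequently $|\{v\mid \beta'\cdot d(p,v)<d(q,v)\}|\ge |V|/2$ by the choice of $\beta$, while the disjoint set $\{v\mid \beta'\cdot d(p,v)>d(q,v)\}$ has size at most $|V|-|V|/2=|V|/2$. Hence $p$ is a $\beta'$-plurality point in the sense of \cite{ABGH20}, so $\beta'\in S_{\mathrm A}$; as $\beta'$ ranges over $(0,\sup S_{\mathrm D})$ this gives the inclusion, and therefore $\sup S_{\mathrm A}\ge\sup S_{\mathrm D}$. Combined with the previous paragraph, $\sup S_{\mathrm A}=\sup S_{\mathrm D}$, i.e. $\beta(p,V)$ is the same under both definitions; the equalities for $\beta_{(X,d)}(V)$ and $\beta^*_{(X,d)}$ follow by taking the appropriate supremum over $p$ and infimum over $V$.

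I do not expect a real obstacle here — the argument is elementary. The two points that require care are (i) that equality can only be claimed for the \emph{threshold} $\beta(p,V)$, not for a fixed $\beta$, so one must work with suprema and ``spend'' an arbitrarily small slack $\beta-\beta'$ to convert the tie-breaking of \Cref{def:Plurpoint} into the strict inequality of \cite{ABGH20}; and (ii) the degenerate case $d(p,v)=0$ (that is, $v=p$), which needs its own line when upgrading ``$\beta\cdot d(p,v)\le d(q,v)$'' to ``$\beta'\cdot d(p,v)<d(q,v)$''.
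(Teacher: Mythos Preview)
Your argument is correct and is essentially the paper's proof recast in direct (set-inclusion) form rather than by contradiction: both directions rest on the same two observations, namely that the Aronov \etal condition at $\beta$ immediately implies \Cref{def:Plurpoint} at $\beta$, and that \Cref{def:Plurpoint} at $\beta$ implies the Aronov \etal condition at any $\beta'<\beta$ by spending the slack $\beta-\beta'$ to turn the weak inequality into a strict one. Your write-up is in fact slightly more careful than the paper's, since you explicitly isolate the degenerate case $d(p,v)=0$ when passing from ``$\le$'' to ``$<$''.
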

\begin{proof}
We will use $\beta(p,V)$ to denote the definition given in \cite{ABGH21} (and in our introduction), and  $\tilde{\beta}(p,V)$ to denote our definition from \Cref{def:Plurpoint}. We will show that for every metric space $(X,d)$, voter multiset $V$ in $X$, and point $p\in X$, it holds that $\beta(p,V)=\tilde{\beta}(p,V)$. The equivalence between the other parameters will follow.
Fix $|V|=n$. There are two directions for the proof:
\begin{itemize}
	\item \sloppy $\beta(p,V)\le\tilde{\beta}(p,V)$. Assume by contradiction that $\tilde{\beta}(p,V)<\beta(p,V)$, so thus there exists some $\alpha\in (\tilde{\beta}(p,V),\beta(p,V)]$.
	By the definition of $\beta(p,V)$, for every $q\in X$ it holds that
	$\left|\left\{ v\in V\mid\alpha\cdot d(p,v)<d(q,v)\right\} \right|\ge\left|\left\{ v\in V\mid\alpha\cdot d(p,v)>d(q,v)\right\} \right|$. Clearly, for the weak inequality we get $\left|\left\{ v\in V \mid\alpha\cdot d(p,v)\le d(q,v)\right\} \right|\ge\frac{n}{2}$, and thus $\tilde{\beta}(p,V)\ge\alpha$, a contradiction.
	
	\item \sloppy $\tilde{\beta}(p,V)\le\beta(p,V)$. Assume by contradiction that  $\beta(p,V)<\tilde{\beta}(p,V)$, so there exists $\varepsilon>0$ such that $\beta(p,V)+\varepsilon<\tilde{\beta}(p,V)$.
	By the definition of $\tilde{\beta}(p,V)$, there exists
	$\alpha\ge\beta(p,V)+\varepsilon$ such that for every $q$, we have $\left|\left\{ v\in V\mid\alpha\cdot d(p,v)\le d(q,v)\right\} \right|\ge\frac{n}{2}$.
	Let $\alpha'=\alpha-\frac{\varepsilon}{2}\in(\beta(p,V),\alpha)$. Then for every $q\ne p$, we have $\left|\left\{ v\in V\mid\alpha'\cdot d(p,v)<(q,v)\right\} \right|\ge\left|\left\{ v\in V\mid\alpha\cdot d(p,v)\le d(q,v)\right\} \right|\ge\frac{n}{2}$,
	implying that $\left|\left\{ v\in V\mid\alpha'\cdot d(p,v)<d(q,v)\right\} \right|\ge\left|\left\{ v\in V\mid\alpha'\cdot d(p,v)>d(q,v)\right\} \right|$.
	Clearly, for $q=p$, it holds that $\left|\left\{ v\in V\mid\alpha'\cdot d(p,v)<d(q,v)\right\} \right|\ge\left|\left\{ v\in V\mid\alpha'\cdot d(p,v)>d(q,v)\right\} \right|$.
	It follows that $p$ is an $\alpha'$-plurality point, a contradiction. 
\end{itemize}
\end{proof}

\section{Proof of \Cref{clm:ballObservation}}\label{appballObservation}
\begin{proof}
	By translation and rotation, we can assume w.l.o.g. that $\vec{a}=\vec{0}$, and $\vec{b}=\|\vec{a}-\vec{b}\|\cdot e_1$ ($e_1$ here is the first standard basis vector).
	A straightforward calculation shows that  
	\begin{align*}
	&\left\{ \vec{x}\in\R^{d}\mid\beta\cdot\left\| \vec{a}-\vec{x}\right\| >\| \vec{b}-\vec{x}\| \right\}  \\&\qquad =\left\{ \vec{x}\in\R^{d}\mid\left(x_{1}-\| \vec{a}-\vec{b}\| \right)^{2}+\sum_{i=2}^d x_{i}^{2}<\beta^{2}\cdot\sum_{i= 1}^d x_{i}^{2}\right\} \\
	&\qquad =\left\{ \vec{x}\in\R^{d}\mid\left(1-\beta^2\right)x_{1}^2-2x_1\| \vec{a}-\vec{b}\|+\| \vec{a}-\vec{b}\|^2+\left(1-\beta^2\right)\sum_{i=2}^d x_{i}^{2}<0\right\} \\
	&\qquad =\left\{ \vec{x}\in\R^{d}\mid\left(x_{1}-\frac{\| \vec{a}-\vec{b}\| }{1-\beta^{2}}\right)^{2}+\sum_{i=2}^d x_{i}^{2}<\frac{\beta^{2}\| \vec{a}-\vec{b}\| ^{2}}{\left(1-\beta^{2}\right)^{2}}\right\}~. 
	\end{align*}
	Thus we indeed obtain a ball with center at $\vec{o}=\frac{\| \vec{a}-\vec{b}\| }{1-\beta^{2}}\cdot e_{1}=\vec{a}+\frac{1}{1-\beta^{2}}\cdot(\vec{a}-\vec{b})$, and radius $r=\sqrt{\frac{\beta^{2}\| \vec{a}-\vec{b}\| ^{2}}{\left(1-\beta^{2}\right)^{2}}}=\beta\cdot\left\| \vec{o}-\vec{a}\right\|$.
\end{proof}

\section{Proof of equation (\ref{eq:ChoiseBeta})}\label{app:ChooseBeta}
Set $f(\beta,q)=\|\vec{t}-\vec{w}\|^2=4-w^{2}=4-\left(\frac{1}{2}(1-\beta^{2})q-\beta+\frac{3}{2q}\right)^{2}$.
We will show that for our choice of $\beta$, and for every $q\in(\frac1\beta,\frac{1}{1-\beta})$, it holds that $\sqrt{f(\beta,q)}\le\frac{1}{\beta}$, thus proving equation (\ref{eq:ChoiseBeta}). We have
\[
\frac{\partial}{\partial q}f(\beta,q)=2\left(\frac{1}{2}(1-\beta^{2})q-\beta+\frac{3}{2q}\right)\left(\frac{1}{2}(1-\beta^{2})-\frac{3}{2q^{2}}\right)
\]
which equals to $0$ only for $q\in\left\{ \pm\sqrt{\frac{3}{1-\beta^{2}}},\frac{\sqrt{4\beta^{2}-3}\pm\beta}{\beta^{2}-1}\right\} $.\footnote{See calculation \href{https://www.wolframalpha.com/input/?i=\%284\%E2\%88\%92\%281\%2F2\%281\%E2\%88\%92b\%5E2\%29x\%E2\%88\%92b\%2B3\%2F\%282x\%29\%29\%5E2\%29\%27}{here}.}

As we restrict our attention to $q\in(\frac1\beta,\frac{1}{1-\beta})$, it follows that
once we fixed $\beta$, the functoin $f(\beta,q)$ has a maximum at $\sqrt{\frac{3}{1-\beta^{2}}}$ (note that $\sqrt{\frac{3}{1-\beta^{2}}}\in(\frac1b,\frac{1}{1-b})$ for every $b\in(\frac12,1)$).
It thus will be enough to prove that 
\[
f(\beta,q)\le f\left(\beta,\sqrt{\frac{3}{1-\beta^{2}}}\right)=1+2\beta^{2}+2\sqrt{3}\sqrt{1-\beta^{2}}\beta\le\frac{1}{\beta^{2}}\,.
\]
This expression could be ``massaged'' into a degree 4 polynomial.
Thus we can obtain an exact solution. In particular, for every $\beta\in\left(0,\frac{1}{2}\sqrt{\frac{1}{2}+\sqrt{3}-\frac{1}{2}\sqrt{4\sqrt{3}-3}}\right]\approx(0,0.557]$,\footnote{See calculation \href{https://www.wolframalpha.com/input/?i=1\%2B2b\%5E2\%2B2*sqrt\%283\%29sqrt\%281-b\%5E2\%29b+\%3C\%3D+1\%2F\%28b\%5E2\%29}{here}.}
it holds that $\sqrt{f(\beta,q)}\le\frac{1}{\beta}$, as required.

\end{document}